\newcommand{\N}{\mathbb{N}}
\newcommand{\U}{\mathsf{U}}
\newcommand{\ok}{\ \mathsf{ok}}
\newcommand{\lfl}{\textsf{Lean4Lean}\xspace}
\newcommand{\type}{\ \mathsf{type}}
\newcommand{\scott}[1]{\llbracket #1\rrbracket}
\newcommand*{\axiom}[2][]{\infer[#1]{}{#2}}
\newcommand{\code}[1]{\mbox{\ttfamily\small #1}}
\newcommand{\hole}{\code{\_}}
\DeclareMathOperator{\suc}{S}
\DeclareMathOperator{\imax}{imax}
\definecolor{keywordcolor}{rgb}{0.7, 0.1, 0.1}   
\definecolor{tacticcolor}{rgb}{0.0, 0.1, 0.6}    
\definecolor{commentcolor}{rgb}{0.4, 0.4, 0.4}   
\definecolor{symbolcolor}{rgb}{0.0, 0.1, 0.6}    
\definecolor{sortcolor}{rgb}{0.1, 0.5, 0.1}      
\definecolor{attributecolor}{rgb}{0.7, 0.1, 0.1} 
\newcommand{\sref}[2]{\hyperref[#2]{#1~\ref{#2}}}
\begin{document}

\title{Lean4Lean: Verifying a Typechecker for Lean, in Lean}

\author{Mario Carneiro}
\email{marioc@chalmers.se}
\orcid{0000-0002-0470-5249}
\affiliation{%
  \institution{Chalmers University of Technology}
  \institution{University of Gothenburg}
  \city{Gothenberg}
  \country{Sweden}
}

\renewcommand{\shortauthors}{M. Carneiro}

\begin{abstract}
In this paper we present a new ``external checker'' for the Lean theorem prover, written in Lean itself. This is the first complete typechecker for Lean 4 other than the reference implementation in C++ used by Lean itself, and our new checker is competitive with the original, running between 20\% and 50\% slower and usable to verify all of Lean's mathlib library, forming an additional step in Lean's aim to self-host the full elaborator and compiler. Moreover, because the checker is written in a language which admits formal verification, it is possible to state and prove properties about the kernel itself, and we report on progress to formalize the Lean type theory abstractly and prove some theorems about it. Finally, we combine these to get a proof of correctness of parts of the kernel. We plan to use this project to help justify any future changes to the kernel and type theory and ensure unsoundness does not sneak in through either the abstract theory or implementation bugs. The verification is already paying off, as one soundness bug has been spotted and fixed as a result of this work.
\end{abstract}

\begin{CCSXML}
<ccs2012>
  <concept>
    <concept_id>10002950.10003705.10011686</concept_id>
    <concept_desc>Mathematics of computing~Mathematical software performance</concept_desc>
    <concept_significance>500</concept_significance>
  </concept>
  <concept>
    <concept_id>10011007.10010940.10010992.10010998</concept_id>
    <concept_desc>Software and its engineering~Formal methods</concept_desc>
    <concept_significance>300</concept_significance>
  </concept>
  <concept>
    <concept_id>10003456.10003457.10003490.10003491.10003495</concept_id>
    <concept_desc>Social and professional topics~Systems analysis and design</concept_desc>
    <concept_significance>100</concept_significance>
  </concept>
</ccs2012>
\end{CCSXML}

\ccsdesc[500]{Mathematics of computing~Mathematical software performance}
\ccsdesc[300]{Software and its engineering~Formal methods}
\ccsdesc[100]{Social and professional topics~Systems analysis and design}
\keywords{Lean, proof assistant, external typechecker, implementation, metatheory, type theory, proof theory}



\maketitle

\section{Introduction}

\textsc{Lean} \cite{demoura} is a theorem prover based on the Calculus of Inductive Constructions (\textsf{CIC}) \cite{cic}, quite similar to its older brother \textsc{Rocq}\footnote{Formerly known as Coq.} \cite{coq2020}, but as the name suggests, one of the ways it sought to differentiate itself was the desire to make the kernel ``leaner,'' relying on simpler primitives while retaining most of the power of the system. This was in particular informed by periodic news of unsoundnesses in Rocq\footnote{\url{https://github.com/rocq-prover/rocq/blob/master/dev/doc/critical-bugs.md}} due to complications of the many interacting features in the kernel. So from the beginning, Lean advertised its simpler foundations and multiple external checkers \cite{tc,trepplein} as reasons that it would not suffer from the same troubles. These hopes were broadly upheld: for the entire release history of Lean 3, there were no soundness bugs reported against the kernel (written in C++), and Carneiro \cite{leantt} showed the consistency of Lean with respect to ZFC with $n$ inaccessible cardinals for all $n<\omega$, so the soundness story seemed fairly strong.

But times move on, and Lean 4 now exists as an (almost) ground-up rewrite of Lean with most components now written in Lean itself. The kernel was one of the few components that was not rewritten, but it was extended with various features for performance reasons like bignum arithmetic, nested inductive types, and primitive projections, and unfortunately some soundness bugs crept in during the process and Lean's record is no longer spotless. Another thing that was lost during the port was the external checkers: the whole metaprogramming infrastructure was redesigned so old external checkers are no longer applicable to Lean 4. While most of the new features can be treated as mere abbreviations, nested inductive types and $\eta$ for structures significantly impact the theory, and as a result the soundness proof from \cite{leantt} is no longer directly applicable. (But see \cite{sebastianthesis}, which covers some of these modifications.)

We interpret this as a cautionary tale: It is not sufficient to see a prior system and hope to do better through sheer force of will. We are all human, and mistakes in both the theory and the implementation happen as a matter of course. The way to do better is not to rewrite, but to set up a process that structurally ensures that mistakes are either prevented or corrected. For a proof assistant, there are three main ways to do this: (1) \emph{testing}, (2) \emph{independent reimplementation}, and (3) \emph{formal verification}. Lean has a fairly robust test suite, but testing helps more in the high-frequency low-impact domain; soundness bugs are rare and generally occur in areas that are insufficiently tested because the developer did not even consider the error condition.

The main contribution of this paper lies in (2): an independent reimplementation. To restore the soundness story of Lean to a satisfactory state, we believe it is necessary to bring back the external checkers. And what better language to do so than Lean itself?  There are reasons not to use Lean for a truly \emph{independent} verifier (and more external checkers are coming, in Scala\footnote{\url{https://github.com/gebner/trepplein}} and Rust\footnote{\url{https://github.com/ammkrn/nanoda_lib}}), but there are also advantages to having a kernel written in Lean:
\begin{itemize}
  \item The Lean elaborator is already written in Lean as a form of ``whitebox automation'': users are able to make use of elaborator APIs to write their own tactics with just as much flexibility as the core language itself. Similarly, having a ``whitebox kernel'' makes it easier for Lean users to query the internal state and understand how Lean reduces or typechecks terms.
  \item Unlike Rust and Scala, Lean is designed for both programming and proving, so a kernel written in Lean can be formally verified. And we have good reasons for wanting the kernel to satisfy certain properties\ldots
\end{itemize}

In this paper, we will present \href{https://github.com/digama0/lean4lean}{\lfl},\footnote{\url{https://github.com/digama0/lean4lean}} an external verifier for Lean 4, written in Lean. The verifier itself is more than a prototype: it is fully capable of checking the entirety of \textsf{Lean} (the Lean compiler package), \textsf{Batteries} (the extended standard library), and \textsf{Mathlib} (the Lean mathematics library). It is not as fast as the original kernel written in C++ because the Lean compiler still has some ways to go to compete with C++ compilers, but the overhead is still reasonable in exchange for the additional guarantees it can provide, and it is suitable as a complementary step in Lean projects that wish to validate correctness in a variety of ways.

The second part of this project, which is much larger and on which we report partial progress, is the specification of Lean's metatheory and verification of the \lfl kernel with respect to that theory, covering mistake prevention measure (3). This is similar to the MetaRocq project \cite{sozeau20,coqcoqcorrect}, and amounts to a formalization of \cite{leantt}. Our principal contributions here are a definition of the typing judgment, some regularity theorems we can prove and others we can state, and the verification of some representative functions from the typechecker. The remainder is left as future work.

In this paper, we will avoid making value judgments about Lean or Rocq and their respective theories. These systems came to be what they are now as a result of complex historical factors, and we seek only to validate the behavior of Lean as it exists, even though it lacks some type theoretic properties that would have made the job easier if true. One possible outcome of this project is a recommendation for how to improve the typing rules, but this comes at a high cost to existing users so we do not take it lightly and we have no such recommendations currently.

The paper is organized as follows: \sref{Section}{sec:base-theory} defines the main typing judgment and describes some of its properties. \sref{Section}{sec:typeck} gives the core data structures of the typechecker and how they relate to the typing rules from \autoref{sec:base-theory}. \sref{Section}{sec:verify} sets the stage for the verification of the typechecker, and how it was used to find a soundness bug in the kernel. \sref{Section}{sec:env} explains how the global structure of the environment is put together from individual typing judgments. \sref{Section}{sec:inductive} explains the complexities associated with supporting inductive types, treated as an extension of the base MLTT theory. \sref{Section}{sec:results} gives some performance results regarding the complete verifier, and \autoref{sec:related-work} compares this project to MetaRocq, a Rocq formalization of Rocq metatheory and a verified kernel.

\section{Base theory}\label{sec:base-theory}

\subsection{Expressions}

At the heart of the theory is the \href{https://github.com/digama0/lean4lean/blob/cpp2026/Lean4Lean/Theory/VExpr.lean#L6}{\code{VExpr}}\footnote{The prefix ``\lstinline|V|'' for ``verified'' disambiguates it from the \href{https://github.com/leanprover/lean4/blob/v4.20.1/src/Lean/Expr.lean\#L301}{\code{Expr}} type used by the kernel itself. This is the more abstracted version of expressions used for specification purposes.} type, which represents expressions of type theory. This is a minimalistic version of the type actually used by the kernel in \sref{section}{sec:typeck-TrExpr}, to ease the burden of proving theoretical properties.\footnote{MetaRocq seems not to do this; it uses the same type for proving metatheory and for writing metaprograms (it's not the same type as the one Rocq itself uses because that type is in ML, but it is roughly equivalent).}

\begin{lstlisting}
inductive $\mbox{\href{https://github.com/digama0/lean4lean/blob/cpp2026/Lean4Lean/Theory/VExpr.lean\#L6}{VExpr}}$ where
  | bvar (deBruijnIndex : Nat)
  | sort (u : VLevel)
  | const (declName : Name) (us : List VLevel)
  | app (fn arg : VExpr)
  | lam (binderType body : VExpr)
  | forallE (binderType body : VExpr)
\end{lstlisting}

\begin{itemize}
  \item \lstinline|bvar $n$| represents the $n$'th variable in the context, counting from the inside out. As we will see, Lean itself uses ``locally nameless'' representation, with a combination of de Bruijn variables and named free variables, but here we only have pure de Bruijn variables.
  \item \lstinline|sort $u$| represents a universe, written as \lstinline|Sort $u$| (or \lstinline|Prop| and \lstinline|Type u| which are syntax for \lstinline|Sort 0| and \lstinline|Sort (u+1)| respectively).
  \item \lstinline|const $c$ $us$| represents a reference to global constant $c$ in the environment, instantiated with universes $us$. (Constants in Lean can be universe-polymorphic, and they are instantiated with concrete universes at each use site.) Inductive constructors, recursors, and definitions are all represented using this constructor.
  \item \lstinline|app $f$ $a$| is function application, written $f\;a$ or $f(a)$.
  \item \lstinline|lam $A$ $e$| is a lambda-abstraction $\lambda x:A.\, e$. Because we are using de Bruijn variables, $x$ is not represented; instead $e$ is typechecked in an extended context where variable $0$ is type $A$ and the other variables are shifted up by 1.
  \item \lstinline|forallE $A$ $B$| \footnote{The ``\lstinline|E|'' suffix is used because \lstinline|forall| is a keyword; we could use \lstinline|forall| anyway but it would require escaping in several places.} is a dependent $\Pi$-type $\Pi x:A.\, B$, also written as $\forall x:A.\, B$ because for \lstinline|Prop| this is the same as the ``for all'' quantifier. It uses the same binding structure as $\lambda$.
\end{itemize}

The \lstinline|sort $u$| and \lstinline|const $c$ $us$| constructors depend on another type \href{https://github.com/digama0/lean4lean/blob/cpp2026/Lean4Lean/Theory/VExpr.lean#L6}{\code{VLevel}}, which is the grammar of \emph{level expressions}:
\begin{lstlisting}
inductive $\mbox{\href{https://github.com/digama0/lean4lean/blob/cpp2026/Lean4Lean/Theory/VExpr.lean\#L6}{VLevel}}$ where
  | zero  : VLevel
  | succ  : VLevel → VLevel
  | max   : VLevel → VLevel → VLevel
  | imax  : VLevel → VLevel → VLevel
  -- imax a b means (if b = 0 then 0 else max a b)
  | param : Nat → VLevel
\end{lstlisting}
These come up when typechecking expressions, for example the type of \lstinline|sort u| is \lstinline|sort (succ u)|. The one notable case here is \lstinline|param $i$|, which represents the $i$th universe variable, where $i<n$ where $n$ is the number of universe parameters to the current declaration.

This definition corresponds to the following BNF-style grammar from \cite{leantt}, with the main change being that it is more precise about how variables and binding are represented. (For presentational reasons, we will use named variables in the informal version, but the formalization uses \lstinline|lift| and \lstinline|subst| functions when moving expressions between contexts.)
\begin{align*}
\ell&::=0\mid \suc\ell\mid \max(\ell,\ell)\mid \imax(\ell,\ell)\mid u\\
e&::=x\mid \U_\ell\mid c_{\bar{\ell}}\mid e\;e\mid \lambda x:e.\;e\mid \forall x:e.\;e\\
\Gamma&::=\cdot\mid \Gamma,x:e
\end{align*}

\subsection{Typing and definitional equality}
\begin{figure}
\begin{mathparpagebreakable}
\boxed{\Gamma\ni x:\alpha}\and
\axiom[l-zero]{\Gamma,x:\alpha\ni x:\alpha}\and
\infer[l-succ]{\Gamma\ni y:\beta}{\Gamma,x:\alpha\ni y:\beta}\\
\boxed{\Gamma\vdash_{E,n} e\equiv e':\alpha}\and
(\Gamma\vdash e:\alpha)\triangleq(\Gamma\vdash e\equiv e:\alpha)\and
\infer[t-bvar]{\Gamma\ni x:\alpha}{\Gamma\vdash x:\alpha}\and
\infer[t-symm]{\Gamma\vdash e\equiv e':\alpha}{\Gamma\vdash e'\equiv e:\alpha}\and
\infer[t-trans]{\Gamma\vdash e_1\equiv e_2:\alpha\quad\Gamma\vdash e_2\equiv e_3:\alpha}{\Gamma\vdash e_1\equiv e_3:\alpha}\and
\infer[t-sort]{n\vdash \ell,\ell'\ok\quad \ell\equiv\ell'}{\Gamma\vdash \U_\ell\equiv\U_{\ell'}:\U_{\suc\ell}}\and
\infer[t-const]{{\bar u}.(c_{\bar u}:\alpha)\in E\quad \forall i,\;n\vdash \ell_i,\ell'_i\ok\ \wedge\ \ell_i\equiv\ell'_i}{\Gamma\vdash c_{\bar{\ell}}\equiv c_{\bar{\ell'}}:\alpha[\bar{u}\mapsto \bar{\ell}]}\and
\infer[t-lam]{\Gamma\vdash\alpha\equiv \alpha':\U_{\ell}\quad \Gamma,x:\alpha\vdash e\equiv e':\beta}{\Gamma\vdash(\lambda x:\alpha.\;e)\equiv (\lambda x:\alpha'.\;e'):\forall x:\alpha.\;\beta}\and
\infer[t-all]{\Gamma\vdash \alpha\equiv \alpha':\U_{\ell_1}\quad \Gamma,x:\alpha\vdash \beta\equiv \beta':\U_{\ell_2}}{\Gamma\vdash (\forall x:\alpha.\;\beta)\equiv (\forall x:\alpha'.\;\beta'):\U_{\imax(\ell_1, \ell_2)}}\and
\infer[t-app]{\Gamma\vdash e_1\equiv e'_1:\forall x:\alpha.\;\beta\quad \Gamma\vdash e_2\equiv e'_2:\alpha}{\Gamma\vdash e_1\;e_2\equiv e'_1\;e'_2:\beta[x\mapsto e_2]}\and
\infer[t-conv]{\Gamma\vdash \alpha\equiv \beta:\U_{\ell}\quad \Gamma\vdash e\equiv e':\alpha}{\Gamma\vdash e\equiv e':\beta}\and
\infer[t-beta]{\Gamma,x:\alpha\vdash e:\beta\quad\Gamma\vdash e':\alpha}{\Gamma\vdash (\lambda x:\alpha.\;e)\;e'\equiv e[x\mapsto e']:\beta[x\mapsto e']}\and
\infer[t-eta]{\Gamma\vdash e:\forall y:\alpha.\;\beta}{\Gamma\vdash (\lambda x:\alpha.\;e\;x)\equiv e:\forall y:\alpha.\;\beta}\and
\infer[t-proof-irrel]{\Gamma\vdash p:\U_0\quad \Gamma\vdash h:p\quad \Gamma\vdash h':p}{\Gamma \vdash h\equiv h':p}\and
\infer[t-extra]{{\bar u}.(e\equiv e':\alpha)\in E\quad \forall i,\;n\vdash \ell_i\ok}{\Gamma\vdash e[\bar{u}\mapsto \bar{\ell}]\equiv e'[\bar{u}\mapsto \bar{\ell}]:\alpha[\bar{u}\mapsto \bar{\ell}]}
\end{mathparpagebreakable}
\caption{The rules for the judgment $\Gamma\vdash e\equiv e':\alpha$, with parameters $E$ (the global environment) and $n$ (the number of universe parameters in context), suppressed in the notation. This is \href{https://github.com/digama0/lean4lean/blob/cpp2026/Lean4Lean/Theory/Typing/Basic.lean\#L17}{\code{VEnv.IsDefEq}} in the formalization.}\label{fig:typing}
\end{figure}
For the typing rules, we use a slight variation of the typing judgments from \cite{leantt}, see \autoref{fig:typing}. Some remarks:

\begin{itemize}
  \item In \cite{leantt}, there were two judgments $\Gamma\vdash e:\alpha$ and $\Gamma\vdash e_1\equiv e_2$ which are mutually inductive (due to \textsc{t-conv} and \textsc{t-proof-irrel}), and $\Gamma\vdash e_1\equiv e_2:\alpha$ was a defined notion for $\Gamma\vdash e_1\equiv e_2\ \wedge\ \Gamma\vdash e_1:\alpha\ \wedge\ \Gamma\vdash e_2:\alpha$. In this version, we have only one all-in-one relation $\Gamma\vdash e_1\equiv e_2:\alpha$, and we define $\Gamma\vdash e:\alpha$ to mean $\Gamma\vdash e\equiv e:\alpha$ and define $\Gamma\vdash e_1\equiv e_2$ as $\exists\alpha.\;(\Gamma\vdash e_1\equiv e_2:\alpha)$.

  We conjecture the two formulations to be equivalent, but this version seems to be easier to prove basic structural properties about (see \autoref{sec:typing-properties}). Moreover Lean does not have good support for mutual inductive predicates, so keeping it single-inductive makes induction proofs easier.

  \item We also define $\Gamma\vdash \alpha\type$ to mean $\exists \ell.\ (\Gamma\vdash \alpha:\U_\ell)$.
  \item $\vdash\Gamma\ok$ is defined recursively, using $\Gamma\vdash A\type$.
  \item The rules \textsc{t-sort}, \textsc{t-const}, \textsc{t-extra} make use of a judgment $n\vdash \ell\ok$, which simply asserts that every \lstinline|param $i$| in $\ell$ satisfies $i<n$. The $\ell\equiv\ell'$ predicate asserts that $\ell$ and $\ell'$ are extensionally equivalent, i.e. for all assignments $v:\N\to\N$ of natural number values to the universe variables, $\scott{\ell}_{v}=\scott{\ell'}_{v}$.

  \item The \textsc{t-extra} rule says that we can add arbitrary definitional equalities from the environment. This is how we will add the two supported ``extensions'' to the theory, for inductive types and quotients, without the details of these extensions complicating reasoning about the core theory. But we can't actually support \emph{arbitrary} definitional equalities---instead each theorem about the typing judgment has its own assumptions about what extensions are allowed.  (However, note that the specific constraints on \textsc{t-extra} are still unfinished, pending future work on inductive type specification.)
\end{itemize}

\subsection{Properties of the typing judgment}\label{sec:typing-properties}

\begin{lemma}[basic properties]\label{lem:closedN}\ %
  \begin{enumerate}
    \item (\href{https://github.com/digama0/lean4lean/blob/cpp2026/Lean4Lean/Theory/Typing/Lemmas.lean#L321-L322}{\code{IsDefEq.closedN'}}) If $\Gamma\vdash e\equiv e':\alpha$ and $\Gamma$ is closed, then $e$, $e'$, and $\alpha$ are closed (all free variables have indices less than $|\Gamma|$).
    \item (\href{https://github.com/digama0/lean4lean/blob/cpp2026/Lean4Lean/Theory/Typing/Lemmas.lean#L485-L486}{\code{IsDefEq.weakN}}, weakening) If $\Gamma,\Gamma'\vdash e\equiv e':\alpha$, then $\Gamma,\Delta,\Gamma'\vdash e\equiv e':\alpha$.
    \item (\href{https://github.com/digama0/lean4lean/blob/cpp2026/Lean4Lean/Theory/Typing/Lemmas.lean#L554-L555}{\code{IsDefEq.instL}}) If $\Gamma\vdash_{E,n} e\equiv e':\alpha$ and $\forall i.\;n'\vdash \ell_i\ok$, then $\Gamma[\bar u\mapsto\bar\ell]\vdash_{E,n'} e[\bar u\mapsto\bar\ell]\equiv e'[\bar u\mapsto\bar\ell]:\alpha[\bar u\mapsto\bar\ell]$.
    \item (\href{https://github.com/digama0/lean4lean/blob/cpp2026/Lean4Lean/Theory/Typing/Lemmas.lean#L588-L589}{\code{IsDefEq.instN}}) If $\Gamma,x:\beta\vdash e_1\equiv e_2:\alpha$ and $\Gamma\vdash e_0:\beta$, then $\Gamma\vdash e_1[x\mapsto e_0]\equiv e_2[x\mapsto e_0]:\alpha[x\mapsto e_0]$.
  \end{enumerate}
\end{lemma}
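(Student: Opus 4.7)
The plan is to prove all four statements by induction on the derivation of $\Gamma \vdash e \equiv e' : \alpha$, doing the parts in the stated order so that each may freely use the earlier ones. Because the judgment is a single inductive relation, each proof decomposes into one case per rule in \autoref{fig:typing}, and the bulk of the work is just managing de~Bruijn bookkeeping. Before starting, I would develop a small toolkit of purely syntactic lemmas on \lstinline|VExpr|: behavior of \lstinline|lift| and \lstinline|subst| on each constructor; commutation of lifts; commutation of a lift past a substitution; and, crucially, the substitution composition identity $e[x\mapsto a][y\mapsto b] = e[y{+}1\mapsto b][x\mapsto a[y\mapsto b]]$. Analogous lemmas for \lstinline|VLevel| substitution (distributivity over \lstinline|succ|, \lstinline|max|, \lstinline|imax|, and preservation of the Scott semantics $\scott{\ell}_v$) will be needed for part~(3).

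For part~(1), I would strengthen the induction hypothesis to also say ``all types appearing in $\Gamma$ are closed.'' Then the \textsc{t-bvar} case follows from an induction on $\Gamma \ni x : \alpha$ via \textsc{l-zero}/\textsc{l-succ}; the \textsc{t-lam}, \textsc{t-all} cases extend the context and shift the bound by one; and the \textsc{t-app}, \textsc{t-beta}, \textsc{t-const}, \textsc{t-extra} cases use the syntactic fact that substitution of a closed term into a term closed under the extended context yields a closed term, and that $\alpha[\bar u\mapsto\bar\ell]$ is closed whenever $\alpha$ is.

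For part~(2), the only really structural case is \textsc{t-bvar}, where the lookup in $\Gamma,\Gamma'$ must be reindexed to $\Gamma,\Delta,\Gamma'$ (variables in $\Gamma'$ are unchanged, variables in $\Gamma$ are shifted by $|\Delta|$); this is a short induction on the \lstinline|Lookup| derivation. Every rule with a binder invokes the IH on a context of the form $\Gamma,\Gamma',x{:}\sigma$, giving the weakened version in $\Gamma,\Delta,\Gamma',x{:}\sigma$. The rules producing substituted conclusions (\textsc{t-app}, \textsc{t-beta}, \textsc{t-const}, \textsc{t-extra}) require that weakening commutes with the relevant substitution, one of the syntactic lemmas above. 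Part~(3) is analogous: induct on the derivation, using that $n'\vdash \ell_i\ok$ implies $n'\vdash \ell\{\bar u\mapsto\bar\ell\}\ok$ whenever $n\vdash\ell\ok$, and that extensional $\equiv$ on \lstinline|VLevel| is preserved under instantiation (both proved by straightforward induction on \lstinline|VLevel|).

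Part~(4) is the only genuinely delicate induction. The \textsc{t-bvar} case requires a three-way split on the de~Bruijn index of $x$ relative to $|\Gamma'|$: below, equal, or above. In the ``equal'' case the conclusion reduces to $\Gamma,\Gamma'[x\mapsto e_0]\vdash e_0\uparrow{|\Gamma'|}:\beta[x\mapsto e_0]$, which is exactly the premise $\Gamma\vdash e_0:\beta$ weakened by $\Gamma'[x\mapsto e_0]$ using part~(2). The binder rules need a lift–substitution commutation to justify applying the IH under one more variable. The \textsc{t-app} and especially \textsc{t-beta} cases use the substitution composition identity listed above so that the conclusion's double substitution can be rearranged into the form produced by the IHs. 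This bookkeeping is the main obstacle: everything is routine given the right lemmas, but stating the lift/subst lemmas with precisely the right indices (and getting Lean to accept the rewrites) is where almost all of the effort goes. A cleaner alternative, should the direct approach become unwieldy, is to re-prove weakening and substitution uniformly via a single ``parallel substitution'' (a de~Bruijn–indexed family of expressions) that specializes to each case.
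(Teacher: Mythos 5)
Your overall strategy --- induction on the derivation, supported by a toolkit of lift/substitution lemmas --- is exactly the paper's approach, and your treatments of parts (2)--(4) (reindexing the lookup for weakening, the three-way index split at \textsc{t-bvar} in part (4), weakening the premise $\Gamma\vdash e_0:\beta$ in the ``equal'' case, and the substitution composition identity for \textsc{t-beta}) match what the formalization does. However, part (1) has a genuine gap at \textsc{t-const} and \textsc{t-extra}: in those rules the type $\alpha$ (and, for \textsc{t-extra}, the expressions $e$ and $e'$) come from the global environment $E$, not from any subderivation, so the induction on $\Gamma\vdash e\equiv e':\alpha$ tells you nothing about them. Your plan quietly relies on ``$\alpha[\bar u\mapsto\bar\ell]$ is closed whenever $\alpha$ is,'' but nothing in your setup establishes that $\alpha$ is closed in the first place. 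The fix, which is the one nontrivial point of part (1), is to add a well-formedness hypothesis on $E$ and run a \emph{double} induction: first over the construction of the environment (each declared constant's type is typed in the empty context, hence closed by the very statement being proved), and only then over the typing derivation. The same omission leaks into your part (2): in the \textsc{t-const} case the difficulty is not that weakening commutes with the level substitution (it does, trivially), but that re-applying the rule in $\Gamma,\Delta,\Gamma'$ yields the \emph{unlifted} type $\alpha[\bar u\mapsto\bar\ell]$, so you need part (1)'s closedness to see that the lift by $|\Delta|$ acts as the identity on it. With that ingredient added, the rest of your plan is the routine de~Bruijn bookkeeping you describe.
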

\begin{proof}[Proof sketch]\ %
  \begin{enumerate}
    \item This is a straightforward proof by induction on $\Gamma\vdash e\equiv e':\alpha$, except that one gets stuck at \textsc{t-const} because (assuming the environment is well-typed, see \autoref{sec:env}) we know that $\vdash c_{\bar\ell}:\alpha$ and need that $\alpha$ is closed. So in fact this is a double induction, first over the environment $E$ and then over $\Gamma\vdash e\equiv e':\alpha$. We also need lemmas about $e[x\mapsto e']$ and $e[\bar u\mapsto\bar\ell]$ preserving closedness, but these are also direct by induction on $e$.
    \item By induction. We use \sref{Lemma}{lem:closedN}.1 in the \textsc{t-const} case, to show that the $\alpha$ in $\Gamma,\Gamma'\vdash c_{\bar\ell}\equiv c_{\bar\ell'}:\alpha$ can be used in $\Gamma,\Delta,\Gamma'\vdash c_{\bar\ell}\equiv c_{\bar\ell'}:\alpha$ without renaming any bound variables because $\alpha$ is closed.
    \item By induction.
    \item By induction on the first hypothesis. This uses\\ \sref{Lemma}{lem:closedN}.2 to lift $\Gamma\vdash e_0:\beta$ under binders.
    \vspace{-1.4em}
  \end{enumerate}
\end{proof}

\begin{lemma}[\href{https://github.com/digama0/lean4lean/blob/cpp2026/Lean4Lean/Theory/Typing/Lemmas.lean\#L648-L649}{\code{IsDefEq.defeqDF\_l}}]\label{lem:defeqDF_l}\ %
  If $\Gamma,x:\alpha\vdash e_1\equiv e_2:\beta$ and $\Gamma\vdash \alpha\equiv \alpha':\U_\ell$, then $\Gamma,x:\alpha'\vdash e_1\equiv e_2:\beta$.
\end{lemma}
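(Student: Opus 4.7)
The plan is to prove the statement by induction on the derivation of $\Gamma, x:\alpha \vdash e_1 \equiv e_2 : \beta$, after first strengthening the statement to allow arbitrary extra context: if $\Gamma, x:\alpha, \Gamma' \vdash e_1 \equiv e_2 : \beta$ and $\Gamma \vdash \alpha \equiv \alpha' : \U_\ell$, then $\Gamma, x:\alpha', \Gamma' \vdash e_1 \equiv e_2 : \beta$. This strengthening is needed so that the induction hypothesis survives passing under binders in \textsc{t-lam}, \textsc{t-all}, \textsc{t-beta}, and \textsc{t-eta}, where the subderivations take place in a context extended by one more variable.

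Most cases are mechanical: apply the induction hypothesis to every premise, then reapply the same rule. The rules \textsc{t-sort}, \textsc{t-const}, and \textsc{t-extra} do not even mention the context $\Gamma, x:\alpha$ in an essential way (the conclusion does not depend on it beyond being well-formed), so they transfer directly. The symmetry/transitivity/conversion rules propagate straightforwardly by the induction hypothesis.

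The only interesting case is \textsc{t-bvar}, backed by the lookup judgment $\Gamma \ni y : \gamma$. Induct separately on the lookup. If the lookup uses \textsc{l-succ} (i.e., $y$ is in $\Gamma'$ or in $\Gamma$), the result holds unchanged because neither the variable nor its type is affected by replacing $\alpha$ with $\alpha'$. If the lookup uses \textsc{l-zero} at the position of $x$, then in the original derivation the conclusion is $\Gamma, x:\alpha, \Gamma' \vdash x : \alpha^{+}$ (where $\alpha^+$ denotes $\alpha$ appropriately lifted through $\Gamma'$), while in the modified context we get $\Gamma, x:\alpha', \Gamma' \vdash x : \alpha'^{+}$. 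We then recover the desired type by applying \textsc{t-conv} with the equation $\Gamma, x:\alpha', \Gamma' \vdash \alpha'^{+} \equiv \alpha^{+} : \U_\ell$, obtained from the second hypothesis via \textsc{t-symm} together with weakening (\sref{Lemma}{lem:closedN}.2) through both $x:\alpha'$ and $\Gamma'$.

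The main obstacle is bookkeeping with de Bruijn indices in the \textsc{t-bvar} case: one must carefully track the lifts applied to $\alpha$ and $\alpha'$ as the variable $x$ is accessed from underneath $\Gamma'$, and ensure that the weakened equation $\Gamma, x:\alpha', \Gamma' \vdash \alpha^+ \equiv \alpha'^+ : \U_\ell$ has precisely the right shape to feed into \textsc{t-conv}. Conceptually this is routine, but it is the one place where the proof does real work beyond mechanical recursion.
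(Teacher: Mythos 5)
Your proof is correct, but it takes a genuinely different route from the paper's. You prove the lemma from scratch by induction on the derivation, after generalizing the statement to allow conversion of an entry in the middle of the context ($\Gamma,x:\alpha,\Gamma'$), with the only substantive case being \textsc{t-bvar} at the position of $x$, repaired via \textsc{t-symm}, weakening, and \textsc{t-conv}. The paper instead derives the lemma in three lines from results it already has: it establishes $\Gamma,x:\alpha'\vdash x:\alpha$ (by \textsc{t-bvar}, weakening of $\Gamma\vdash\alpha\equiv\alpha':\U_\ell$, and \textsc{t-conv}), weakens the main hypothesis to $\Gamma,\_:\alpha',x:\alpha\vdash e_1\equiv e_2:\beta$, and then applies the substitution lemma (\sref{Lemma}{lem:closedN}.4) with the identity-like substitution $x\mapsto x$, which collapses back to the desired judgment. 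The trade-off: the paper's argument delegates all the de Bruijn and binder bookkeeping to the already-proven weakening and substitution lemmas, so there is no new induction at all; your argument is self-contained and yields a slightly more general (mid-context) statement directly, but it essentially replays the same induction and lift/shift bookkeeping that was already done once for \texttt{instN}, which is exactly the duplication the paper's proof is designed to avoid. Both are valid; in a formalization the paper's version is the cheaper one to maintain.
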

\begin{proof}
  We have $\Gamma,x:\alpha'\vdash x:\alpha$ by \textsc{t-bvar}, \textsc{t-conv} and weakening, and $\Gamma,\hole:\alpha',x:\alpha\vdash e_1\equiv e_2:\beta$ by weakening, so $\Gamma,x:\alpha'\vdash e_1[x\mapsto x]\equiv e_2[x\mapsto x]:\beta[x\mapsto x]$ by \sref{Lemma}{lem:closedN}.4.
\end{proof}

\begin{lemma}[\href{https://github.com/digama0/lean4lean/blob/cpp2026/Lean4Lean/Theory/Typing/Lemmas.lean\#L725-L726}{\code{HasType.forallE\_inv}}]\label{lem:forallE_inv}\ %
  If $\Gamma\vdash (\forall x:\alpha.\;\beta):\gamma$ then $\Gamma\vdash \alpha\type$ and $\Gamma,x:\alpha\vdash \beta\type$ (and therefore also $\Gamma\vdash (\forall x:\alpha.\;\beta)\type$).
\end{lemma}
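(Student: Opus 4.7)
The plan is to prove a stronger inversion statement by induction on the derivation: for every $\Gamma\vdash e_1\equiv e_2:\gamma$, whenever either side has the form $\forall x:\alpha.\,\beta$, the desired $\type$-ness holds for its components. By symmetry it suffices to formulate the statement for the left-hand side; combining with t-symm then handles the right-hand side inside the induction. The lemma is the specialization $e_1=e_2=\forall x:\alpha.\,\beta$, and the corollary $\Gamma\vdash(\forall x:\alpha.\,\beta)\type$ then follows by feeding the two conclusions into t-all. A small preliminary observation used throughout is that $\Gamma\vdash e\equiv e':\gamma$ yields $\Gamma\vdash e:\gamma$ and $\Gamma\vdash e':\gamma$ via t-symm and t-trans, so each $\equiv$-premise in a rule already encapsulates the two associated typings.

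The bulk of the cases are routine. The constructor-like rules t-bvar, t-sort, t-const, t-lam, and t-app produce non-$\forall$ heads on both sides and are vacuous. In t-all, the two premises deliver the $\type$ judgments directly for the LHS's $\alpha$ and $\beta$; for the RHS's $\alpha'$ and $\beta'$ the only wrinkle is that the inner premise types $\beta'$ in the context $\Gamma,x:\alpha$ rather than $\Gamma,x:\alpha'$, which is repaired by \autoref{lem:defeqDF_l}. The rules t-symm, t-trans, t-conv, and t-proof-irrel reduce to the IH: t-symm swaps which side carries the $\forall$; t-trans applies the IH to whichever premise still has a $\forall$ head; t-conv and t-proof-irrel preserve the shapes of $e_1,e_2$ so the IH applies to a premise immediately.

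The main obstacle is t-beta (and, similarly, t-eta), where the right-hand side $e[x\mapsto e']$ may be a $\forall$ even though $e$ is not syntactically one. I would case-split on the head of $e$: if $e$ itself has shape $\forall y:\alpha_1.\,\beta_1$, the IH applied to the premise $\Gamma,x:\alpha\vdash e:\beta$ gives $\Gamma,x:\alpha\vdash\alpha_1\type$ and $\Gamma,x:\alpha,y:\alpha_1\vdash\beta_1\type$, which I push through the substitution using \autoref{lem:closedN}.4 together with the other premise $\Gamma\vdash e':\alpha$; if $e$ is the bound variable itself, then $e[x\mapsto e']=e'$ and the IH applies directly to $\Gamma\vdash e':\alpha$; no other head form for $e$ can substitute to a $\forall$. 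The t-eta case is strictly easier since its right-hand side is literally $e$, on which the IH applies. The remaining rule t-extra requires an assumption that no axiom in $E$ equates a bare $\forall$-headed term with something lacking the expected component typings; this is exactly one of the t-extra constraints the paper flags as pending, and I would discharge it as part of the environment-well-formedness condition of \autoref{sec:env}.
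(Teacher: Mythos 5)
Your proposal is correct and follows essentially the same route as the paper: a strengthened inversion over the derivation of $\Gamma\vdash e_1\equiv e_2:\gamma$ covering either side being a $\forall$, with \autoref{lem:defeqDF_l} repairing the context in the \textsc{t-all} case, a head-shape case split plus \autoref{lem:closedN}.4 in the \textsc{t-beta} case, and an environment-level hypothesis discharging \textsc{t-extra}. The only cosmetic difference is that the paper phrases the \textsc{t-extra} obligation as an induction hypothesis coming from the (double) induction over the environment rather than as a separately stated well-formedness side condition, but these amount to the same requirement.
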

\begin{proof}[Proof sketch]
  By induction on $\Gamma\vdash e_1\equiv e_2:\gamma$, assuming that one of $e_1$ or $e_2$ is $\forall x:\alpha.\;\beta$. Most cases are trivial or inapplicable. Of those that remain:
  \begin{itemize}
    \item \textsc{t-all}: If $e_1$ is $\forall x:\alpha.\;\beta$ we are done; if $e_2$ is $\forall x:\alpha.\;\beta$ then we obtain $\Gamma,x:\alpha'\vdash \beta\type$ and need \sref{Lemma}{lem:defeqDF_l} to get $\Gamma,x:\alpha\vdash \beta\type$.
    \item \textsc{t-beta}: For this to apply, it must be that we have  $(\lambda y:\delta.\;e)\;e'\equiv e[y\mapsto e']$ where $e[y\mapsto e']$ is $\forall x:\alpha.\;\beta$. So either $e=y$ and $e'=\forall x:\alpha.\;\beta$, in which case the inductive hypothesis for $e'$ applies, or $e=\forall x:\alpha'.\;\beta'$ with $\alpha'[y\mapsto e']=\alpha$ and $\beta'[y\mapsto e']=\beta$ in which case $\Gamma,y:\delta\vdash \alpha'\type$ and $\Gamma,y:\delta,x:\alpha'\vdash \beta'\type$ by the inductive hypothesis for $e$, and \sref{Lemma}{lem:closedN}.4 applies.
    \item \textsc{t-extra}: It could be that $e[\bar u\mapsto\bar\ell]$ is $\forall x:\alpha.\;\beta$, but this can only happen if $e=\forall x:\alpha'.\;\beta'$ with $\alpha'[\bar u\mapsto\bar\ell]=\alpha$ and $\beta'[\bar u\mapsto\bar\ell]=\beta$, so by induction hypothesis (for the environment) $\vdash \alpha'\type$ and $x:\alpha'\vdash \beta'\type$, and we conclude using level substitution and weakening.
  \end{itemize}
\end{proof}

\begin{lemma}[\href{https://github.com/digama0/lean4lean/blob/cpp2026/Lean4Lean/Theory/Typing/Lemmas.lean\#L771}{\code{HasType.sort\_inv}}]\label{lem:sort_inv}\ %
  If $\Gamma\vdash_{E,n} \U_\ell:\gamma$ then $n\vdash \ell\ok$ (and therefore also $\Gamma\vdash \U_\ell:\U_{\suc\ell}$).
\end{lemma}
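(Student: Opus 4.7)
The plan is to prove this by induction on the underlying definitional-equality derivation, following the same pattern as \sref{Lemma}{lem:forallE_inv}. Concretely, I would strengthen the statement to: if $\Gamma\vdash_{E,n} e_1\equiv e_2:\gamma$ and at least one of $e_1, e_2$ is $\U_\ell$, then $n\vdash\ell\ok$. Unfolding $\Gamma\vdash \U_\ell:\gamma$ as $\Gamma\vdash \U_\ell\equiv \U_\ell:\gamma$ then gives the lemma.

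Most cases are immediately dispatched. The introduction rules \textsc{t-bvar}, \textsc{t-const}, \textsc{t-lam}, \textsc{t-all}, \textsc{t-app} produce non-sort head symbols and so are vacuous; \textsc{t-symm}, \textsc{t-trans}, and \textsc{t-conv} reduce directly to the inductive hypothesis (\textsc{t-trans} by applying the IH on whichever side has $\U_\ell$ and propagating via the intermediate term, which must itself be convertible to $\U_\ell$ but this isn't actually needed — we just use the side that already matches). The case \textsc{t-sort} is exactly the hypothesis. For \textsc{t-beta}, the redex $(\lambda y:\alpha.\,e)\,e'$ is not a sort, so only the contractum $e[y\mapsto e']$ can equal $\U_\ell$; either $e=\U_\ell$, in which case the premise $\Gamma,y{:}\alpha\vdash e:\beta$ feeds the IH, or $e=y$ and $e'=\U_\ell$, in which case the premise $\Gamma\vdash e':\alpha$ does. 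For \textsc{t-eta}, the $\lambda$-side cannot be a sort, so $e=\U_\ell$ and the premise applies. For \textsc{t-proof-irrel}, $h$ or $h'$ equals $\U_\ell$ and the corresponding premise $\Gamma\vdash\U_\ell:p$ directly feeds the IH.

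The main obstacle, as in \sref{Lemma}{lem:forallE_inv}, is \textsc{t-extra}: if $e[\bar u\mapsto\bar\ell]=\U_\ell$, then by inversion $e=\U_{\ell''}$ for some $\ell''$ with $\ell''[\bar u\mapsto\bar\ell]=\ell$. Appealing to the environment well-formedness (i.e.\ the outer environment induction already used in \sref{Lemma}{lem:closedN}.1, under which each \textsc{t-extra} entry carries a typing derivation) gives $n''\vdash\ell''\ok$, where $n''$ is the declared parameter count for that entry. Combined with the \textsc{t-extra} side condition $\forall i,\;n\vdash\ell_i\ok$, a direct lemma showing that universe substitution preserves $\ok$-ness (proved by induction on $\ell''$) yields $n\vdash\ell''[\bar u\mapsto\bar\ell]\ok$, i.e.\ $n\vdash\ell\ok$.

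The parenthetical conclusion $\Gamma\vdash\U_\ell:\U_{\suc\ell}$ then follows by a single application of \textsc{t-sort} with $\ell'=\ell$, using that $\ell\equiv\ell$ holds extensionally.
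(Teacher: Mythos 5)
Your proposal is correct and follows exactly the route the paper intends: the paper's proof is simply ``similar to \sref{Lemma}{lem:forallE_inv},'' i.e.\ the same strengthened induction on the derivation with one side fixed to $\U_\ell$, and your case analysis (trivial/vacuous cases, the substitution inversion for \textsc{t-beta}, and the environment induction plus level-substitution lemma for \textsc{t-extra}) is precisely the elaboration of that sketch.
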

\begin{proof}[Proof sketch]
  Similar to \sref{Lemma}{lem:forallE_inv}.
\end{proof}

\begin{theorem}[\href{https://github.com/digama0/lean4lean/blob/cpp2026/Lean4Lean/Theory/Typing/Lemmas.lean\#L811-L812}{\code{IsDefEq.isType}}]\label{thm:isType}\ %
  If $\vdash\Gamma\ok$ and $\Gamma\vdash e:\alpha$ then $\Gamma\vdash \alpha\type$.
\end{theorem}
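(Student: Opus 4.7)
The approach is to strengthen the statement to arbitrary equalities: if $\vdash \Gamma\ok$ and $\Gamma\vdash e_1\equiv e_2:\alpha$, then $\Gamma\vdash\alpha\type$. This strengthening is necessary because rules like \textsc{t-symm}, \textsc{t-trans}, \textsc{t-beta}, \textsc{t-eta}, \textsc{t-proof-irrel}, and \textsc{t-extra} naturally produce equalities between distinct terms, so restricting to the reflexive case would block the induction. We then proceed by induction on the derivation of $\Gamma\vdash e_1\equiv e_2:\alpha$, implicitly assuming a well-formedness invariant on the global environment $E$ needed for the constant-lookup cases.

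Most cases are almost immediate. For \textsc{t-sort} and \textsc{t-all} the output type is already $\U_\ell$, and a further application of \textsc{t-sort} closes the goal, using that $\ell\ok$ follows either from the premises or from \autoref{lem:sort_inv} applied to the inductive hypothesis. For \textsc{t-conv} and \textsc{t-proof-irrel} the target type is explicitly given a $\U_\ell$-typing as part of the rule's premises, and \textsc{t-symm} and \textsc{t-trans} follow immediately from the IH. For \textsc{t-lam}, the IH on the body (valid in the extended context $\Gamma,x:\alpha$, which is well-formed because the premise $\Gamma\vdash\alpha:\U_\ell$ gives $\Gamma\vdash\alpha\type$) yields $\Gamma,x:\alpha\vdash\beta\type$; combined with $\Gamma\vdash\alpha:\U_\ell$ we apply \textsc{t-all}. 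For \textsc{t-app} and \textsc{t-beta}, the function/body type $\forall x:\alpha.\;\beta$ is a type by IH, so \autoref{lem:forallE_inv} gives $\Gamma,x:\alpha\vdash\beta\type$; substitution via \autoref{lem:closedN}.4 then shows that $\beta[x\mapsto e_2]$ (resp.\ $\beta[x\mapsto e']$) is a type. The \textsc{t-eta} case is direct, since $\forall y:\alpha.\;\beta$ appears verbatim as a premise.

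The three cases that require real work are \textsc{t-bvar}, \textsc{t-const}, and \textsc{t-extra}. For \textsc{t-bvar}, we must extract from $\vdash\Gamma\ok$ the fact that each declared type in $\Gamma$ is well-typed in its local prefix, then lift it to the full context $\Gamma$ using \autoref{lem:closedN}.2. For \textsc{t-const} and \textsc{t-extra} we rely on a global invariant: each declaration $\bar u.(c_{\bar u}:\alpha)\in E$ (resp.\ $\bar u.(e\equiv e':\alpha)\in E$) must satisfy $\vdash_{E,|\bar u|}\alpha\type$. Combined with the $\ell_i\ok$ premises on the level instantiation, \autoref{lem:closedN}.3 (level substitution) and weakening together produce $\Gamma\vdash \alpha[\bar u\mapsto\bar\ell]\type$.

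The main obstacle is precisely this environment dependence: the theorem is not provable by pure induction on the typing derivation, because \textsc{t-const} and \textsc{t-extra} require pulling type information out of $E$. The proof must therefore be carried out in tandem with a well-formedness predicate on $E$ asserting, for every declaration, that its declared type is itself a type in the prior environment, mirroring the double-induction structure already used in \autoref{lem:closedN}.1. Exposing this invariant in a form usable by the typing proof is the piece that is formally deferred to \autoref{sec:env}.
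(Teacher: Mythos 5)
Your proposal is correct and follows essentially the same route as the paper: induction on the full $\Gamma\vdash e_1\equiv e_2:\alpha$ derivation (which is what induction on $\Gamma\vdash e:\alpha$ means here, since typing is defined as reflexive def.\ eq.), with \autoref{lem:forallE_inv} plus the substitution lemma handling \textsc{t-app}, \autoref{lem:sort_inv} handling \textsc{t-all}, and weakening plus the environment well-formedness invariant handling \textsc{t-bvar}, \textsc{t-const}, and \textsc{t-extra}. The only nitpick is that in \textsc{t-beta} you do not need \autoref{lem:forallE_inv} at all, since the premise $\Gamma,x:\alpha\vdash e:\beta$ already yields $\Gamma,x:\alpha\vdash\beta\type$ directly from the inductive hypothesis before substituting.
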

\begin{proof}[Proof sketch]
  By induction, using weakening and substitution lemmas. The nontrivial cases are:
  \begin{itemize}
    \item \textsc{t-app}: We have $\Gamma\vdash(\forall x:\alpha.\;\beta)\type$ from the IH and $\Gamma\vdash e_2:\alpha$ by assumption, and from \sref{Lemma}{lem:forallE_inv} we get $\Gamma,x:\alpha\vdash\beta\type$, so $\Gamma\vdash \beta[x\mapsto e_2]\type$ by the substitution lemma.
    \item \textsc{t-all}: We have $\Gamma\vdash\U_{\ell_1}\type$ and $\Gamma,x:\alpha\vdash\U_{\ell_1}\type$ from the IH, so by \sref{Lemma}{lem:sort_inv} we have $n\vdash \ell_1,\ell_2\ok$, therefore $\Gamma\vdash\U_{\imax(\ell_1, \ell_2)}\type$.
    \vspace{-1.4em}
  \end{itemize}
\end{proof}

\begin{lemma}[\href{https://github.com/digama0/lean4lean/blob/cpp2026/Lean4Lean/Theory/Typing/Lemmas.lean\#L818-L821}{\code{IsDefEq.instDF}}]\label{lem:instDF}\ %
  If $\Gamma,x:\alpha\vdash f\equiv f':\beta$ and $\Gamma\vdash a\equiv a':\alpha$ then $\Gamma\vdash f[x\mapsto a]\equiv f'[x\mapsto a']:\beta[x\mapsto a]$.
\end{lemma}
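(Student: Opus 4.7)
The plan is to bridge the two substitutions $[x\mapsto a]$ and $[x\mapsto a']$ by re-expressing each as a $\beta$-redex: $(\lambda x:\alpha.\,f)\,a$ on the left and $(\lambda x:\alpha.\,f')\,a'$ on the right. Setting this up requires $\Gamma \vdash \alpha : \U_{\ell_\alpha}$ for some $\ell_\alpha$ and $\Gamma,x:\alpha \vdash \beta : \U_{\ell_\beta}$ for some $\ell_\beta$; both follow from \autoref{thm:isType} (applied, say, to the reflexive $\Gamma,x:\alpha \vdash f : \beta$), provided the extended context is well-formed.

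With those levels in hand, \textsc{t-lam} applied to the first hypothesis and the reflexive $\Gamma \vdash \alpha \equiv \alpha : \U_{\ell_\alpha}$ gives $\Gamma \vdash \lambda x:\alpha.\,f \equiv \lambda x:\alpha.\,f' : \forall x:\alpha.\,\beta$. Combining this with the second hypothesis via \textsc{t-app} yields $\Gamma \vdash (\lambda x:\alpha.\,f)\,a \equiv (\lambda x:\alpha.\,f')\,a' : \beta[x\mapsto a]$, and two applications of \textsc{t-beta} rewrite the endpoints back to substitutions: $\Gamma \vdash (\lambda x:\alpha.\,f)\,a \equiv f[x\mapsto a] : \beta[x\mapsto a]$ and $\Gamma \vdash (\lambda x:\alpha.\,f')\,a' \equiv f'[x\mapsto a'] : \beta[x\mapsto a']$. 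Chaining these three equivalences via \textsc{t-symm} and \textsc{t-trans} delivers the goal up to one mismatch: the type annotation on the last step reads $\beta[x\mapsto a']$ rather than the desired $\beta[x\mapsto a]$, so a single application of \textsc{t-conv} is needed to realign.

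That final \textsc{t-conv} demands $\Gamma \vdash \beta[x\mapsto a'] \equiv \beta[x\mapsto a] : \U_{\ell_\beta}$, and the same $\beta$-redex trick supplies it: form $\lambda x:\alpha.\,\beta$ via \textsc{t-lam} on reflexivity, apply it to $a'$ and $a$ through \textsc{t-app} (using the symmetric form of the second hypothesis), and reduce both resulting redexes with \textsc{t-beta}. The recursion bottoms out here because the target type is $\U_{\ell_\beta}[x\mapsto a'] = \U_{\ell_\beta}$---sort expressions carry only \lstinline|VLevel| parameters and no \lstinline|VExpr| subterms, so no further type conversion is needed. Note that \autoref{lem:closedN}.4 alone will not do the job directly, since it requires substituting a single fixed term on both sides of the equivalence.

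The main obstacle is bookkeeping rather than conceptual: the \textsc{t-lam} premises that $\alpha$ and $\beta$ are types rest on context-well-formedness---presumably a $\vdash \Gamma\ok$ hypothesis threaded through the surrounding development, as in \autoref{thm:isType}---and the type annotations across the chain of \textsc{t-symm}, \textsc{t-trans}, and the single \textsc{t-conv} must be tracked carefully so that the final conclusion has precisely the type $\beta[x\mapsto a]$ the lemma demands.
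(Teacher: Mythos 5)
Your proposal is correct and matches the paper's proof essentially step for step: both use \autoref{thm:isType} to establish that $\alpha$ and $\beta$ are well-typed, both convert the substitutions into $\beta$-redexes via \textsc{t-lam}, \textsc{t-app}, and two uses of \textsc{t-beta}, and both discharge the resulting type mismatch $\beta[x\mapsto a']$ versus $\beta[x\mapsto a]$ by running the same redex argument once more on $\beta$ itself, observing that the recursion terminates because $\U_\ell$ is invariant under substitution. The only difference is presentational: the paper factors the argument as a conditional lemma applied twice, whereas you inline the second application.
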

\begin{proof}[Proof sketch]
  This is a generalization of \sref{Lemma}{lem:closedN}.4, but we need \autoref{thm:isType} to prove that $\alpha$ and $\beta$ are well-typed to apply the rules below.

  First we show the claim assuming $\Gamma\vdash \beta[x\mapsto a]\equiv \beta[x\mapsto a']:\U_\ell$. In this case we have
\begin{align*}
  f[x\mapsto a]&\equiv (\lambda x:\alpha.\;f)\;a\\
  &\equiv (\lambda x:\alpha.\;f')\;a'\\
  &\equiv f'[x\mapsto a'] \qquad {}:\beta[x\mapsto a],
\end{align*}
where we use \textsc{t-beta} twice, and use the assumption $\beta[x\mapsto a]\equiv \beta[x\mapsto a']$ to justify the last step ecause \textsc{t-beta} gives the equality at the type $\beta[x\mapsto a']$ instead.

To finish, we apply the lemma twice, once with $f$ and $\beta$ so that it suffices to show $\beta[x\mapsto a]\equiv \beta[x\mapsto a']:\U_\ell$ and then again with $\beta$ in place of $f$ and $\U_\ell$ in place of $\beta$ so that it suffices to show $\U_\ell[x\mapsto a]\equiv \U_\ell[x\mapsto a']:\U_{\suc\ell}$, which is true by reflexivity.
\end{proof}

\subsection{Conjectured properties of the typing judgment}\label{sec:typing-conjectures}

Unfortunately, there are more structural properties of the typing judgment we need beyond the results in the previous section. Many of these are close relatives of each other and can be proved from other conjectures in this section.

\begin{conjecture}[\href{https://github.com/digama0/lean4lean/blob/cpp2026/Lean4Lean/Theory/Typing/UniqueTyping.lean\#L11-L13}{\code{IsDefEq.uniq}}]\label{conj:IsDefEq.uniq}
If $\Gamma\vdash e:\alpha$ and $\Gamma\vdash e:\beta$, then $\Gamma\vdash \alpha\equiv\beta:\U_\ell$ for some $\ell$.
\end{conjecture}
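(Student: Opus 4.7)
The plan is to reduce the conjecture to a family of per-constructor inversion lemmas, one for each syntactic form of $e$, generalizing \sref{Lemma}{lem:forallE_inv} and \sref{Lemma}{lem:sort_inv}. Concretely, I would prove: if $\Gamma\vdash x:\tau$ then $\tau\equiv\alpha$ where $\Gamma\ni x:\alpha$; if $\Gamma\vdash c_{\bar\ell}:\tau$ with $\bar u.(c:\alpha)\in E$, then $\tau\equiv\alpha[\bar u\mapsto\bar\ell]$; if $\Gamma\vdash(\lambda x:\alpha.\;b):\tau$ then there exists $\beta$ with $\Gamma,x:\alpha\vdash b:\beta$ and $\tau\equiv(\forall x:\alpha.\;\beta)$; and if $\Gamma\vdash e_1\;e_2:\tau$ then there exist $\alpha,\beta$ with $\Gamma\vdash e_1:\forall x:\alpha.\;\beta$, $\Gamma\vdash e_2:\alpha$, and $\tau\equiv\beta[x\mapsto e_2]$. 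Each is proved by induction on the derivation with one side fixed to the target form, closing the \textsc{t-conv}, \textsc{t-trans}, and \textsc{t-symm} cases by transitivity of $\equiv$, and handling \textsc{t-beta}, \textsc{t-eta}, and \textsc{t-extra} by case analysis on how a substitution or extra equation can present the target constructor, in the style of the existing proof of \sref{Lemma}{lem:forallE_inv}.

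Given these inversion lemmas, I would prove \autoref{conj:IsDefEq.uniq} by induction on $e$. The leaf cases ($x$, $\U_\ell$, $c_{\bar\ell}$, and $\forall x:\alpha.\;\beta$) follow immediately: each inversion lemma pins the type down uniquely up to $\equiv$, and transitivity closes the goal. The $\lambda$ case reduces to the induction hypothesis on the body, and the application case $e_1\;e_2$ goes through provided one additionally has a ``$\Pi$-injectivity'' principle: if $\Gamma\vdash(\forall x:\alpha_1.\;\beta_1)\equiv(\forall x:\alpha_2.\;\beta_2):\U_\ell$, then $\Gamma\vdash\alpha_1\equiv\alpha_2$ and $\Gamma,x:\alpha_1\vdash\beta_1\equiv\beta_2$ at appropriate sorts, from which $\beta_1[x\mapsto e_2]\equiv\beta_2[x\mapsto e_2]$ follows by \sref{Lemma}{lem:instDF}.

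The main obstacle is $\Pi$-injectivity itself, which in this declarative presentation is essentially of the same difficulty as unique typing: the \textsc{t-proof-irrel}, \textsc{t-eta}, and \textsc{t-extra} rules forbid a purely syntactic inversion of an equality of $\forall$-types, and one typically needs an auxiliary reduction relation together with a Church--Rosser-style confluence argument to extract componentwise equality. Moreover \textsc{t-extra} is deliberately unconstrained at this stage of the development, so even stating the hypothesis correctly requires parameterizing over admissibility conditions on environment extensions ensuring they do not identify $\forall$-types with anything else. For these reasons I expect a proof of \autoref{conj:IsDefEq.uniq} to go hand in hand with the same metatheoretic machinery needed to make the inductive-type and quotient extensions precise.
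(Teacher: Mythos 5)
This statement is explicitly a \emph{conjecture} in the paper: there is no proof to compare against, and the paper records that the corresponding proof in the prior work had an error in a technical lemma. Measured against that, your proposal does not close the gap, and you essentially say so yourself. Your decomposition reduces unique typing to per-constructor inversion lemmas plus $\Pi$-injectivity, but $\Pi$-injectivity is exactly \sref{Conjecture}{conj:IsDefEqU.forallE_inv}.2, which the paper also leaves open and which it notes is proved \emph{mutually} with \sref{Conjecture}{conj:IsDefEq.uniq} in the cited development. So the reduction is circular with respect to the only known proof strategy: you have moved the difficulty from one open statement to another that is entangled with it, rather than breaking the cycle.

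Two further points are worth flagging. First, your suggested escape route --- an auxiliary untyped reduction relation with a Church--Rosser argument --- is precisely what the paper says is unavailable here: the \textsc{t-proof-irrel} rule forces conversion to be typed and mutually inductive with typing, unlike in MetaCoq where untyped conversion ``massively simplifies matters.'' Second, the paper gives a sharper diagnosis of where the known proof breaks than your generic appeal to confluence: the stratification $\vdash_i$ introduced to untangle the mutual induction fails to respect substitution (one needs $\Gamma\vdash_{\max(i,j)} e[x\mapsto e']:\alpha$ but only obtains $\Gamma\vdash_{i+j} e[x\mapsto e']:\alpha$), and it is this specific lemma that must be repaired or routed around. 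Your observation that \textsc{t-extra} must be constrained before the inversion lemmas can even be stated is correct and consistent with the paper's remarks, but it is a precondition for a proof, not a proof. As it stands, the proposal is a reasonable map of the problem, not a solution to it.
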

Unique typing has a major simplifying effect on the theory. Here are some consequences (with proofs omitted, but they are simple algebraic consequences of earlier lemmas):

\begin{corollary}[Consequences of unique typing]\ %
  \begin{enumerate}
    \item (\href{https://github.com/digama0/lean4lean/blob/cpp2026/Lean4Lean/Theory/Typing/UniqueTyping.lean#L69-L71}{\code{IsDefEqU.trans}}) If $\Gamma\vdash e_1\equiv e_2$ and $\Gamma\vdash e_2\equiv e_3$, then $\Gamma\vdash e_1\equiv e_3$. (This is \textsc{t-trans} but without requiring the types to match.)
    \item (\href{https://github.com/digama0/lean4lean/blob/cpp2026/Lean4Lean/Theory/Typing/UniqueTyping.lean#L23-L25}{\code{isDefEq\_iff}}) $\Gamma\vdash e\equiv e':\alpha$ if and only if $\Gamma\vdash e:\alpha$, $\Gamma\vdash e':\alpha$, and $\Gamma\vdash e\equiv e'$. (Hence this formulation implies the one of \cite{leantt}.)
    \item (\href{https://github.com/digama0/lean4lean/blob/cpp2026/Lean4Lean/Theory/Typing/UniqueTyping.lean#L42-L44}{\code{IsDefEqU.defeqDF}}) If $\Gamma\vdash e\equiv e':\alpha$ and $\Gamma\vdash \alpha\equiv\beta$, then $\Gamma\vdash e\equiv e':\beta$. (This is \textsc{t-conv} but with an untyped definitional equality.)
  \end{enumerate}
\end{corollary}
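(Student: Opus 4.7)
The plan is to reduce each of the three statements to a short chain of applications of \textsc{t-symm}, \textsc{t-trans}, and \textsc{t-conv}, with Conjecture~\ref{conj:IsDefEq.uniq} used exclusively to align types before any \textsc{t-conv} or \textsc{t-trans}. First I would assemble a small toolkit: from $\Gamma\vdash e\equiv e':\alpha$ one obtains both $\Gamma\vdash e:\alpha$ and $\Gamma\vdash e':\alpha$ immediately, since \textsc{t-symm} gives $\Gamma\vdash e'\equiv e:\alpha$ and \textsc{t-trans} then gives $\Gamma\vdash e\equiv e:\alpha$ (and symmetrically for $e'$); combined with Theorem~\ref{thm:isType} we also get $\Gamma\vdash\alpha:\U_\ell$ for some $\ell$ whenever the context is well-formed. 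This is the only ``structural'' input used below.

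For (1) I would pick witnesses $\Gamma\vdash e_1\equiv e_2:\alpha$ and $\Gamma\vdash e_2\equiv e_3:\beta$. The toolkit gives $\Gamma\vdash e_2:\alpha$ and $\Gamma\vdash e_2:\beta$, so Conjecture~\ref{conj:IsDefEq.uniq} produces $\Gamma\vdash\alpha\equiv\beta:\U_\ell$. Applying \textsc{t-symm} and \textsc{t-conv} converts $\Gamma\vdash e_2\equiv e_3:\beta$ into $\Gamma\vdash e_2\equiv e_3:\alpha$, and then \textsc{t-trans} yields $\Gamma\vdash e_1\equiv e_3:\alpha$, witnessing $\Gamma\vdash e_1\equiv e_3$. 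For (2) the forward direction is exactly the toolkit observation. For the reverse direction, unfold $\Gamma\vdash e\equiv e'$ to some $\Gamma\vdash e\equiv e':\beta$; the toolkit then gives $\Gamma\vdash e:\beta$, which combined with the hypothesis $\Gamma\vdash e:\alpha$ and Conjecture~\ref{conj:IsDefEq.uniq} yields $\Gamma\vdash\beta\equiv\alpha:\U_\ell$, and a single \textsc{t-conv} promotes the definitional equality to type $\alpha$.

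For (3), let $\gamma$ be a witness for $\Gamma\vdash\alpha\equiv\beta$, i.e.\ $\Gamma\vdash\alpha\equiv\beta:\gamma$; the toolkit gives $\Gamma\vdash\alpha:\gamma$. Theorem~\ref{thm:isType} applied to the other hypothesis $\Gamma\vdash e\equiv e':\alpha$ yields $\Gamma\vdash\alpha:\U_\ell$ for some $\ell$. Conjecture~\ref{conj:IsDefEq.uniq} on these two typings of $\alpha$ gives $\Gamma\vdash\gamma\equiv\U_\ell:\U_{\ell'}$, so \textsc{t-conv} converts the universe equation to $\Gamma\vdash\alpha\equiv\beta:\U_\ell$, and a second \textsc{t-conv} applied to the main hypothesis produces $\Gamma\vdash e\equiv e':\beta$.

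The main ``obstacle'' here is really just bookkeeping: every use of \textsc{t-conv} demands a universe-typed equation between source and target types, and Conjecture~\ref{conj:IsDefEq.uniq} is precisely what lets us fabricate these on demand. Nothing else in the base theory plays that role, which is exactly why all three of these facts fail to be derivable without the conjecture, and why they become essentially mechanical once the conjecture is assumed.
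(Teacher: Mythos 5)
Your derivations are correct and are precisely the ``simple algebraic consequences of earlier lemmas'' that the paper leaves implicit: extract $\Gamma\vdash e:\alpha$ from $\Gamma\vdash e\equiv e':\alpha$ via \textsc{t-symm}/\textsc{t-trans}, then use Conjecture~\ref{conj:IsDefEq.uniq} to manufacture the universe-typed equations that \textsc{t-conv} demands. The one caveat is that part (3) invokes Theorem~\ref{thm:isType} and therefore silently assumes $\vdash\Gamma\ok$, a hypothesis the corollary (like the conjecture itself, in its formalized form) suppresses; you should flag that assumption explicitly.
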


\noindent This group of theorems is proved mutually with \sref{Conjecture}{conj:IsDefEq.uniq} in \cite{leantt}:
\begin{conjecture}[Definitional inversion]\label{conj:IsDefEqU.forallE_inv}\ %
  \begin{enumerate}
    \item (\href{https://github.com/digama0/lean4lean/blob/cpp2026/Lean4Lean/Theory/Typing/UniqueTyping.lean#L73-L74}{\code{IsDefEqU.sort\_inv}}) If $\Gamma\vdash \U_\ell\equiv\U_{\ell'}$ then $\ell\equiv \ell'$.
    \item (\href{https://github.com/digama0/lean4lean/blob/cpp2026/Lean4Lean/Theory/Typing/UniqueTyping.lean#L76-L78}{\code{IsDefEqU.forallE\_inv}}) If $\Gamma\vdash (\forall x:\alpha.\;\beta)\equiv(\forall x:\alpha'.\;\beta')$ then $\Gamma\vdash\alpha\equiv \alpha'$ and $\Gamma,x:\alpha\vdash\beta\equiv \beta'$.\footnote{This theorem is sometimes called ``injectivity of $\forall$'' but this name is slightly misleading because $\forall$ is not injective as a function on types, which is to say if we use propositional equality instead of def.eq then the theorem fails.}
    \item (\href{https://github.com/digama0/lean4lean/blob/cpp2026/Lean4Lean/Theory/Typing/UniqueTyping.lean#L80-L81}{\code{IsDefEqU.sort\_forallE\_inv}}) $\Gamma\vdash \U_\ell\not\equiv(\forall x:\alpha.\;\beta)$.
  \end{enumerate}
\end{conjecture}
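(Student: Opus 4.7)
The plan is to mutually prove \autoref{conj:IsDefEq.uniq} together with the three parts of this conjecture via a confluence argument on an auxiliary reduction relation, following the general strategy of \cite{leantt}. Concretely, I would introduce a parallel reduction $\Rightarrow$ on \lstinline|VExpr| terms that closes $\beta$, $\eta$, delta-expansion for constants marked by \textsc{t-extra}, and proof irrelevance, and show it satisfies the diamond property (typically via Takahashi's ``complete development'' trick). Its reflexive-transitive closure $\Rightarrow^*$ is then Church--Rosser, so any two terms related by the equivalence closure $\simeq$ of $\Rightarrow$ share a common reduct. The key bridge is to show that whenever $\Gamma\vdash e\equiv e':\alpha$ we have $e\simeq e'$: each constructor of \textsc{IsDefEq} corresponds directly to a reduction or a proof-irrelevance step, so this is a routine induction using the weakening and substitution lemmas of \autoref{lem:closedN}.

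Once we have this bridge, the three parts become direct. Both $\U_\ell$ and $\forall x:\alpha.\;\beta$ are weak-head-normal under $\Rightarrow^*$ (no parallel reduction rule rewrites them at the head), and they have distinct head symbols. So for \emph{sort-forallE-inv} the existence of a common reduct is an immediate impossibility; for \emph{sort-inv}, the common reduct of $\U_\ell$ and $\U_{\ell'}$ is itself a sort, and the only way a sort parallel-reduces to a sort is via the level-equivalence congruence used in \textsc{t-sort}, giving $\ell\equiv\ell'$; for \emph{forallE-inv}, a common reduct of the form $\forall x:\gamma.\;\delta$ yields parallel reductions on the components, so $\Gamma\vdash\alpha\equiv\alpha'$ and $\Gamma,x:\alpha\vdash\beta\equiv\beta'$, where the context switch from $\alpha'$ to $\alpha$ in the second conclusion uses \autoref{lem:defeqDF_l}, which itself consumes the already-established $\alpha\equiv\alpha'$ half.

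The main obstacle, and the reason these are conjectures rather than theorems, is handling proof irrelevance soundly inside the confluence proof. Proof irrelevance is not an ordinary rewrite rule: \textsc{t-proof-irrel} identifies \emph{all} terms of a given propositional type, which is only well-defined when a type is available. A purely syntactic parallel reduction that tries to collapse proofs either loses confluence or loses subject reduction, because a subterm only becomes ``a proof'' once we have chosen a typing for its enclosing context. The standard workaround (as in \cite{leantt}) is to annotate the reduction with types so that proof-irrelevance steps are only allowed between terms of propositional type, but then one cannot even decide which steps are legal without knowing the type of the reducts, which forces the whole development to be mutual with \autoref{conj:IsDefEq.uniq}. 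A secondary obstacle is $\eta$, which interacts with $\beta$ in the familiar way and requires the usual Takahashi adjustment; and the \textsc{t-extra} rule, whose confluence must be imposed as an assumption on the specific inductive and quotient axioms used, and so cannot be fully discharged until the constraints on the extension mechanism (noted as future work in \autoref{fig:typing}) are pinned down.
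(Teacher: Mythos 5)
The first thing to note is that the paper does not prove this statement at all: it is explicitly a \emph{conjecture}, downgraded from a theorem of \cite{leantt} because the proof there is known to contain an error. So there is no paper proof to compare against; the relevant question is whether your sketch closes the gap the paper identifies, and it does not. The paper is quite specific about where the known strategy breaks: to carry out the mutual induction between unique typing and definitional inversion, \cite{leantt} stratifies the typing judgment into levels $\vdash_i$, and the induction requires that substitution preserve the level, i.e.\ from $\Gamma,x:\beta\vdash_i e:\alpha$ and $\Gamma\vdash_j e':\beta$ one needs $\Gamma\vdash_{\max(i,j)}e[x\mapsto e']:\alpha$, whereas only $\Gamma\vdash_{i+j}e[x\mapsto e']:\alpha$ holds. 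Your proposal reproduces the same overall architecture --- a typed parallel reduction, confluence via complete developments, mutuality with unique typing because proof-irrelevance steps need type information --- but never says what the induction is on once everything is mutual, and that is exactly the point at which the known proof fails. Your closing paragraph concedes that proof irrelevance forces the development to be mutual with Conjecture~\ref{conj:IsDefEq.uniq}; without a well-founded measure that survives substitution (the $\beta$ case of the diamond lemma substitutes a reduct into a body, so substitution is unavoidable), the mutual induction does not go through.

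There is also a local problem with your ``bridge'' step. You assert that deriving $e\simeq e'$ from $\Gamma\vdash e\equiv e':\alpha$ is ``a routine induction,'' but the \textsc{t-proof-irrel} case equates two arbitrary terms $h,h'$ of a proposition $p$ that need share no syntactic structure whatsoever, so the only way $\simeq$ can absorb this case is if proof irrelevance is itself one of the generating steps of $\Rightarrow$ --- which is precisely the step you later identify as ill-defined without typing information. The bridge is therefore not routine; it is the same circularity in different clothing. To be fair, the overall plan is a reasonable description of the intended proof (it is essentially the one attempted in \cite{leantt}), and the observations that $\U_\ell$ and $\forall x:\alpha.\;\beta$ are head-normal with distinct heads, and that the context switch in part~2 needs Lemma~\ref{lem:defeqDF_l} fed by the already-established first half, are correct as far as they go. But as a proof it is incomplete in exactly the place that makes this statement a conjecture rather than a theorem, and an honest account would present it as a plan with an open central lemma, not as a proof.
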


The reason \sref{Conjecture}{conj:IsDefEq.uniq} and \sref{Conjecture}{conj:IsDefEqU.forallE_inv} have been downgraded from theorems in \cite{leantt} to conjectures here is because the proof has an error in one of the technical lemmas, and it remains to be seen if it is possible to salvage the proof. More precisely, the proof constructs a stratification $\vdash_i$ of the typing judgment in order to break the mutual induction between typing and definitional equality, but this stratification does not and cannot respect substitution; that is, if $\Gamma,x:\beta\vdash_i e:\alpha$ and $\Gamma\vdash_j e':\beta$, then the proof requires $\Gamma\vdash_{\max(i,j)} e[x\mapsto e']:\alpha$ but only $\Gamma\vdash_{i+j} e[x\mapsto e']:\alpha$ holds.

We still have reasons to believe the conjectures here are true, but more work is needed to determine how to structure the proof by induction. In \cite{lennonbertrand2025}, Lennon-Bertrand explores some of the landscape of proof approaches that could be applied here, based on similar work for MetaRocq. The proof of soundness is not impacted because there are alternative routes to construct the model that avoid unique typing (also described in \cite{leantt}), but these conjectures are necessary in at least some form in order to prove the correctness of the typechecker (see \autoref{sec:typeck}).

Another class of conjectured theorems concerns the ``invertibility'' of weakening:\footnote{This is not a trivial theorem, and MetaRocq found a bug in Rocq here; see section 3.5 of \cite{sozeau-touring}.}
\begin{conjecture}[\href{https://github.com/digama0/lean4lean/blob/cpp2026/Lean4Lean/Theory/Typing/UniqueTyping.lean\#L84-L85}{Strengthening}]\label{conj:IsDefEqU.weakN_inv}\ %
  If $e,e'$ do not mention any variables in $\Delta$, then $\Gamma,\Delta,\Gamma'\vdash e\equiv e'$ if and only if $\Gamma,\Gamma'\vdash e\equiv e'$.
\end{conjecture}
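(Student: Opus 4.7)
The backward direction is immediate from weakening (\sref{Lemma}{lem:closedN}.2), so the plan is to prove the forward direction by induction on the derivation. Unwinding the definition, we have some $\alpha$ with $\Gamma,\Delta,\Gamma'\vdash e\equiv e':\alpha$, and we must produce some $\alpha'$ (not necessarily the strengthening of $\alpha$) with $\Gamma,\Gamma'\vdash e\equiv e':\alpha'$. To get a workable induction hypothesis, I would generalize so that $\Gamma'$ may be extended under binders, i.e.\ allow any $\Gamma''$ extending $\Gamma'$ with variables whose types are $\Delta$-free in the appropriate sense.

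Many cases go through directly from the hypothesis: \textsc{t-bvar} (the variable must point into $\Gamma$ or $\Gamma'$, not $\Delta$), \textsc{t-sort}, \textsc{t-const}, \textsc{t-extra} (closed up to universe substitution), and \textsc{t-app}, \textsc{t-lam}, \textsc{t-all} (the syntactic restriction on $e,e'$ transfers to all subterms). The troublesome rules are those that introduce an ``intermediate'' that is not visible in the conclusion: \textsc{t-trans} (the middle term $e_2$), \textsc{t-conv} and \textsc{t-proof-irrel} (the source/witness type), \textsc{t-beta} (the body $f$ and argument $a$ in $(\lambda y{:}\alpha.\,f)\,a$), and \textsc{t-eta}. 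In each of these, nothing in the hypothesis on $e,e'$ rules out $\Delta$-variables appearing in the hidden objects.

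For the type-valued intermediates (\textsc{t-conv}, \textsc{t-proof-irrel}), \sref{Conjecture}{conj:IsDefEq.uniq} should let us swap in a type already available in the smaller context: if $e$ is inductively shown to be typeable in $\Gamma,\Gamma'$ at some $\alpha'$, unique typing provides $\Gamma,\Delta,\Gamma'\vdash\alpha\equiv\alpha'$, and \textsc{t-conv} rebuilds the derivation. This strongly suggests proving strengthening mutually with unique typing and definitional inversion under a shared induction measure. The main obstacle will be \textsc{t-trans} and \textsc{t-beta}, where the intermediate is a term rather than a type: simply chaining two strengthened derivations fails when the connecting $e_2$ (or the redex interior) freely mentions $\Delta$. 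Resolving this will likely require either a confluence and normalization argument (reducing $e,e'$ to normal forms that inherit the $\Delta$-free restriction and comparing them in the smaller context), or a semantic model of Kripke flavor in which the interpretation of a $\Delta$-free expression does not depend on the $\Delta$ slice of the environment. The stratification technique of \cite{leantt} was designed to break exactly this kind of mutual dependency, so repairing the substitution flaw noted in \autoref{sec:typing-conjectures}, or replacing it with a measure that is stable under substitution, is likely where the real work lies.
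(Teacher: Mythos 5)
The statement you are trying to prove is left as a \emph{conjecture} in the paper --- no proof is given, and the surrounding discussion (\autoref{sec:typing-conjectures}) explains why: the natural route goes through \sref{Conjecture}{conj:IsDefEq.uniq} (unique typing), whose known proof strategy via stratification of the typing judgment breaks down because the stratification does not respect substitution. Your proposal is a sound diagnosis of the obstructions rather than a proof, and to your credit you say so explicitly. The backward direction via weakening (\sref{Lemma}{lem:closedN}.2) is correct, the classification of which rules are unproblematic is essentially right (modulo the de Bruijn bookkeeping: ``does not mention $\Delta$'' must be phrased as ``is the lift of a term in the smaller context,'' as the paper notes, so the \textsc{t-bvar} and binder cases need an explicit reindexing argument), and you correctly locate the real difficulty in the rules that introduce hidden intermediates.

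The genuine gap is exactly where you point: in \textsc{t-trans} the middle term $e_2$, and in \textsc{t-beta} the redex components, may freely mention $\Delta$, and nothing in the induction hypothesis constrains them. Your suggested repairs --- invoking unique typing for the type-valued intermediates, or falling back on confluence/normalization or a Kripke-style model for the term-valued ones --- each import an unproven (and in Lean's case nontrivially hard) metatheoretic result: unique typing is itself only a conjecture here, Lean's reduction is known not to be strongly normalizing (\autoref{sec:RecM} exhibits a non-terminating reduction via impredicativity, proof irrelevance, and subsingleton elimination), and no adequate model construction for the typed, proof-irrelevant conversion relation is currently formalized. So the proposal does not close the argument; it reduces the conjecture to other open problems of comparable or greater difficulty, which is consistent with the paper's decision to leave strengthening as a conjecture.
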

\begin{corollary}[Consequences of strengthening]\ %
  \begin{enumerate}
    \item (\href{https://github.com/digama0/lean4lean/blob/cpp2026/Lean4Lean/Theory/Typing/UniqueTyping.lean#L92-L95}{\code{IsDefEq.skips}}) If $\Gamma,\Delta,\Gamma'\vdash e\equiv e':\alpha$ and $e,e'$ do not mention the variables in $\Delta$, then there exists $\alpha'$ not mentioning $\Delta$ such that $\Gamma,\Delta,\Gamma'\vdash e\equiv e':\alpha'$.
    \item (\href{https://github.com/digama0/lean4lean/blob/cpp2026/Lean4Lean/Theory/Typing/UniqueTyping.lean#L123-L124}{\code{IsDefEq.weakN\_iff}}) If $e,e',\alpha$ do not mention any variables in $\Delta$, then $\Gamma,\Delta,\Gamma'\vdash e\equiv e':\alpha$ if and only if $\Gamma,\Gamma'\vdash e\equiv e':\alpha$.
    \item (\href{https://github.com/digama0/lean4lean/blob/cpp2026/Lean4Lean/Theory/Typing/UniqueTyping.lean#L110-L111}{\code{OnCtx.weakN\_inv}}) If $\vdash\Gamma,\Delta,\Gamma'\ok$ and $\Gamma'$ does not mention the variables in $\Delta$ then $\vdash\Gamma,\Gamma'\ok$.
  \end{enumerate}
\end{corollary}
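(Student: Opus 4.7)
The plan is to prove the three parts in order, using the \textsc{Strengthening} conjecture together with \sref{Conjecture}{conj:IsDefEq.uniq} (and its \texttt{defeqDF} consequence) to transport both the judgment itself and its typing information across the context $\Delta$.

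For part 1, I would first forget the type: from $\Gamma,\Delta,\Gamma'\vdash e\equiv e':\alpha$ I extract the untyped equality $\Gamma,\Delta,\Gamma'\vdash e\equiv e'$, and apply \sref{Conjecture}{conj:IsDefEq.weakN_inv} to obtain $\Gamma,\Gamma'\vdash e\equiv e'$, i.e.\ some $\beta$ with $\Gamma,\Gamma'\vdash e\equiv e':\beta$. Then weakening (\sref{Lemma}{lem:closedN}.2) gives $\Gamma,\Delta,\Gamma'\vdash e\equiv e':\beta^+$, where $\beta^+$ denotes $\beta$ with every de Bruijn index $\ge|\Gamma'|$ shifted up by $|\Delta|$. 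By construction $\beta^+$ has no indices in the $\Delta$-range $[|\Gamma'|,\,|\Gamma'|+|\Delta|)$, so $\alpha':=\beta^+$ works.

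For part 2, the backward direction is immediate from weakening. For the forward direction, I would apply part 1 to obtain $\Gamma,\Gamma'\vdash e\equiv e':\beta$ and its weakening $\Gamma,\Delta,\Gamma'\vdash e:\beta^+$. Unique typing (\sref{Conjecture}{conj:IsDefEq.uniq}) applied against the hypothesis $\Gamma,\Delta,\Gamma'\vdash e:\alpha$ yields $\Gamma,\Delta,\Gamma'\vdash\alpha\equiv\beta^+$. Since both $\alpha$ and $\beta^+$ avoid $\Delta$, a second use of \sref{Conjecture}{conj:IsDefEq.weakN_inv} strengthens this to $\Gamma,\Gamma'\vdash\alpha\equiv\beta$ (undoing the shift on $\beta^+$ gives precisely $\beta$). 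Then \texttt{IsDefEqU.defeqDF} rewrites the type in $\Gamma,\Gamma'\vdash e\equiv e':\beta$ to conclude $\Gamma,\Gamma'\vdash e\equiv e':\alpha$.

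For part 3, I would induct on $\Gamma'$. When $\Gamma'$ is empty, $\vdash\Gamma,\Delta\ok$ decomposes through the context-well-formedness constructors by peeling the entries of $\Delta$ off one at a time, since each entry's type is typed in the previous context and is simply discarded. For $\Gamma' = \Gamma'',\,x{:}\tau$ where $\tau$ does not mention $\Delta$ (inherited from the hypothesis on $\Gamma'$), inversion on $\vdash\Gamma,\Delta,\Gamma'',x{:}\tau\ok$ gives both $\vdash\Gamma,\Delta,\Gamma''\ok$ and $\Gamma,\Delta,\Gamma''\vdash\tau:\U_\ell$. The inductive hypothesis produces $\vdash\Gamma,\Gamma''\ok$, and part 2 (applied to $e=e'=\tau$ and $\alpha=\U_\ell$, which is trivially $\Delta$-free) yields $\Gamma,\Gamma''\vdash\tau:\U_\ell$, closing up the context to $\vdash\Gamma,\Gamma'',x{:}\tau\ok$.

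The main obstacle is part 2: \textsc{Strengthening} only strips $\Delta$ from the \emph{untyped} equality, so the type on which $\Gamma,\Gamma'\vdash e\equiv e'$ lands is a priori some unrelated $\beta$ rather than the requested $\alpha$. Bridging this mismatch is exactly what forces the appeal to \sref{Conjecture}{conj:IsDefEq.uniq} together with a second application of \textsc{Strengthening}; the rest is just careful bookkeeping of the de Bruijn shift that identifies $\beta^+$ with the weakening of $\beta$.
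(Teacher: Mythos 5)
Your proposal is correct: the paper states this corollary without proof (it is presented as a direct consequence of the Strengthening conjecture together with the unique-typing machinery), and your derivation---strengthen the untyped equality, recover the original type via \sref{Conjecture}{conj:IsDefEq.uniq} and \texttt{IsDefEqU.defeqDF}, then induct on $\Gamma'$ for the context statement---is exactly the intended route. Your closing observation that the whole difficulty lives in part 2, where Strengthening alone loses the type and unique typing must restore it, correctly identifies why these results sit in the conjectural section rather than among the proved lemmas.
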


\section{The typechecker}\label{sec:typeck}
\subsection{Relating VExpr to Expr}\label{sec:typeck-TrExpr}

The actual Lean kernel does not use any of the machinery from the previous section directly. Instead, it works with another type, \href{https://github.com/leanprover/lean4/blob/v4.20.1/src/Lean/Expr.lean#L301}{\code{Expr}}:
\begin{lstlisting}
inductive $\mbox{\href{https://github.com/leanprover/lean4/blob/v4.20.1/src/Lean/Expr.lean\#L301}{Expr}}$ where
  | bvar (deBruijnIndex : Nat)
  | fvar (fvarId : FVarId)
  | mvar (mvarId : MVarId)
  | sort (u : Level)
  | const (declName : Name) (us : List Level)
  | app (fn arg : Expr)
  | lam (binderName : Name) (type body : Expr)
        (info : BinderInfo)
  | forallE (binderName : Name) (type body : Expr)
            (info : BinderInfo)
  | letE (declName : Name) (type value body : Expr)
         (nonDep : Bool)
  | lit : Literal → Expr
  | mdata (data : MData) (expr : Expr)
  | proj (typeName : Name) (idx : Nat)
         (struct : Expr)
\end{lstlisting}
Comparing this with \href{https://github.com/digama0/lean4lean/blob/cpp2026/Lean4Lean/Theory/VExpr.lean#L6}{\code{VExpr}} reveals some differences:
\begin{itemize}
  \item \lstinline|mvar $n$| is for metavariables in the elaborator. They are banned in kernel terms.
  \item \lstinline|fvar $n$| is a ``free variable'', which is identified by a name rather than an index into the context. Although these are also banned, the kernel creates \lstinline|fvar $n$| expressions during typechecking, and converting between locally nameless and pure de Bruijn is one of the major differences we will have to deal with in verification of the kernel.
  \item The \href{https://github.com/leanprover/lean4/blob/v4.20.1/src/Lean/Level.lean#L90}{\code{Level}} type (not shown) also differs from \href{https://github.com/digama0/lean4lean/blob/cpp2026/Lean4Lean/Theory/VExpr.lean#L6}{\code{VLevel}} similarly.
  \item The \lstinline|lam| and \lstinline|forallE| constructors contain additional elaborator metadata.
  \item The \lstinline|letE| construct is used for \lstinline|let x := e1; e2| expressions. We handle these by simply expanding them to $\code{e2}[\code{x}\mapsto \code{e1}]$.
  \item The \lstinline|lit| constructor is for literals, \lstinline|Literal := Nat ⊕ String|.
  This is one of the new features of the Lean 4 kernel: natural numbers are represented directly as bignums and several functions like \href{https://github.com/leanprover/lean4/blob/v4.20.1/src/Init/Prelude.lean#L1658-L1667}{\code{Nat.mul}} are overridden with a native implementation rather than using the inductive structure of \href{https://github.com/leanprover/lean4/blob/v4.20.1/src/Init/Prelude.lean#L1164}{\code{Nat}} directly (effectively writing numerals in unary). For the purpose of modeling in \href{https://github.com/digama0/lean4lean/blob/cpp2026/Lean4Lean/Theory/VExpr.lean#L6}{\code{VExpr}} however, we just perform exactly this (exponential-size) unfolding in terms of \lstinline|Nat.zero| and \lstinline|Nat.succ| applications. Similarly, \lstinline|String| is a wrapper around \lstinline|List Char| for the purpose of modeling, even though the actual representation in the runtime is as a UTF-8 encoded byte string.

  \item The \lstinline|mdata $d$ $e$| constructor is equivalent to $e$ for modeling purposes (this is used for associating metadata to expressions).
  \item \lstinline|proj $c$ $i$ $e$| is a projection out of a \lstinline|structure| or struc\-ture-like inductive type. (See \autoref{sec:inductive}.) This is the most challenging constructor to desugar, because even though it is equivalent to an application of the recursor for the inductive type, we need the type of $e$ to construct this term. This means that the desugaring function must be type-aware, and is only well-defined up to definitional equality.
\end{itemize}

It is worth mentioning that \href{https://github.com/leanprover/lean4/blob/v4.20.1/src/Lean/Expr.lean#L301}{\code{Expr}} was not defined as part of this project --- this is a type imported directly from the Lean elaborator. In fact, the C++ Lean kernel uses FFI (foreign-function interface) to Lean in order to make use of this type, so \lfl and the C++ kernel are sharing this data structure.

As a result of these considerations, we use an inductive type to encapsulate the translation from \href{https://github.com/leanprover/lean4/blob/v4.20.1/src/Lean/Expr.lean#L301}{\code{Expr}} to \href{https://github.com/digama0/lean4lean/blob/cpp2026/Lean4Lean/Theory/VExpr.lean#L6}{\code{VExpr}}:
\begin{lstlisting}
inductive $\mbox{\href{https://github.com/digama0/lean4lean/blob/cpp2026/Lean4Lean/Verify/Typing/Expr.lean\#L47}{TrExpr}}$ (env : VEnv) (Us : List Name) :
    VLCtx → Expr → VExpr → Prop
\end{lstlisting}
Here $\mathit{env};\mathit{Us};\Delta\vdash e\leadsto e'$ asserts that in local context $\Delta$, with names $\mathit{Us}$ for the local universe parameters and $\mathit{env}$ for the global environment (see \autoref{sec:env}), $e:\code{VExpr}$ is a translation of $e':\code{Expr}$. It also asserts that $\Delta$ is a well-typed context and $e'$ is well-typed in it, so we can use $\exists e'.\;\mathit{env};\mathit{Us};\Delta\vdash e\leadsto e'$ to assert that $e$ is a well typed \href{https://github.com/leanprover/lean4/blob/v4.20.1/src/Lean/Expr.lean#L301}{\code{Expr}}, which is the target specification we have for correctness of the typechecker.

The type \href{https://github.com/digama0/lean4lean/blob/cpp2026/Lean4Lean/Verify/VLCtx.lean#L40}{\code{VLCtx}} appearing here is new, and it contains the information we need to translate between the \href{https://github.com/leanprover/lean4/blob/v4.20.1/src/Lean/Expr.lean#L301}{\code{Expr}} and \href{https://github.com/digama0/lean4lean/blob/cpp2026/Lean4Lean/Theory/VExpr.lean#L6}{\code{VExpr}} types:

\begin{lstlisting}
inductive $\mbox{\href{https://github.com/digama0/lean4lean/blob/cpp2026/Lean4Lean/Verify/VLCtx.lean\#L8}{VLocalDecl}}$ where
  | vlam (type : VExpr)         -- x : type
  | vlet (type value : VExpr)   -- x : type := value

def $\mbox{\href{https://github.com/digama0/lean4lean/blob/cpp2026/Lean4Lean/Verify/VLCtx.lean\#L40}{VLCtx}}$ := List (Option FVarId × VLocalDecl)
\end{lstlisting}

The idea is that \lstinline|vlam $\alpha$| represents a regular variable in the context, the only kind of variable we had in \autoref{sec:base-theory}, while \lstinline|vlet $\alpha$ $v$| is used when we are inside the context of a \lstinline|let x : $\alpha$ := $v$; _| term. We can build a local context in the sense of \autoref{sec:base-theory} by simply dropping the \lstinline|vlet $\alpha$ $v$| terms:

\begin{lstlisting}
def $\mbox{\href{https://github.com/digama0/lean4lean/blob/cpp2026/Lean4Lean/Verify/VLCtx.lean\#L85}{toCtx}}$ : VLCtx → List VExpr
  | [] => []
  | (_, vlam α) :: Δ => α :: toCtx Δ
  | (_, vlet _ _) :: Δ => toCtx Δ
\end{lstlisting}
We have to be careful to reindex the \lstinline|Expr.bvar $i$| indices though, since this includes both let- and lambda-variables while \lstinline|VExpr.bvar $i$| only counts the lambda-variables, with the let-variables expanded to terms.

The \lstinline|Option| \href{https://github.com/leanprover/lean4/blob/v4.20.1/src/Lean/Expr.lean#L227}{\code{FVarId}} in the context represents the ``name'' of the variables. Lean follows the ``locally nameless'' discipline, in which most operations act only on ``closed terms'' (where closed means that there are no \lstinline|bvar $i$| variables outside a binder), so any time it needs to process a term under a binder it first \emph{instantiates} the term, replacing \lstinline|bvar $(i+1)$| with \lstinline|bvar $i$| and \lstinline|bvar 0| with \lstinline|fvar $a$|, where $a:\mbox{\href{https://github.com/leanprover/lean4/blob/v4.20.1/src/Lean/Expr.lean\#L227}{\code{FVarId}}}$ is a (globally) fresh variable name. We store these names in $\Delta$ so that we know how to relate them to de Bruijn indices in the context. The name generator uses a counter in the global state of the typechecker, and there is an invariant that all \href{https://github.com/leanprover/lean4/blob/v4.20.1/src/Lean/Expr.lean#L227}{\code{FVarId}}s in the state are less than the counter value.

The reason it is an \lstinline|Option| is because when we are typing an \emph{open} term, we still have to deal with variables that have not been assigned \href{https://github.com/leanprover/lean4/blob/v4.20.1/src/Lean/Expr.lean#L227}{\code{FVarId}}s. Even though Lean does not directly handle such terms, they appear as subterms of expressions that are handled so we need both cases to have a compositional specification.

\subsection{The typechecker implementation}\label{sec:typeck-impl}
The kernel itself is defined completely separately from \href{https://github.com/digama0/lean4lean/blob/cpp2026/Lean4Lean/Theory/VExpr.lean#L6}{\code{VExpr}}, and is a close mirror of the C++ code, modulo translation into functional style. The two implementations do essentially the same things in the same order, and as a result, most bugs or peculiarities we have discovered in the course of formalizing \lfl are replicated in the C++ kernel, which gives us hope that verifying \lfl will lend strong evidence to the correctness of the C++ code as well.

\subsubsection{Untangling recursion via RecM}\label{sec:RecM}
The kernel consists of a large mutually-recursive set of functions. Because this is difficult for Lean to handle, and it also causes issues for proving theorems about the theorems, it is desirable to break the dependencies.
It turns out that the call graph's feedback vertex set (FVS) number is 4, which is to say that if the below 4 functions are each treated as two nodes with in-edges separate from out-edges, then the call graph becomes acyclic (except for self-loops corresponding to normal recursion) and we can define them in dependency order. Conveniently, all of these functions are also important high-level entry points:

\begin{enumerate}
  \item \href{https://github.com/digama0/lean4lean/blob/cpp2026/Lean4Lean/TypeChecker.lean#L665}{\code{isDefEqCore}} $s$ $t$ returns \lstinline|true| if $\Gamma\vdash s\equiv t$ is provable. (This is normally used via \href{https://github.com/digama0/lean4lean/blob/cpp2026/Lean4Lean/TypeChecker.lean#L145}{\code{isDefEq}} $s$ $t$, which also caches this result, but some callers use \href{https://github.com/digama0/lean4lean/blob/cpp2026/Lean4Lean/TypeChecker.lean#L665}{\code{isDefEqCore}} directly.)
  \item \href{https://github.com/digama0/lean4lean/blob/cpp2026/Lean4Lean/TypeChecker.lean#L435}{\code{whnf}} $e$ returns the weak head normal form (WHNF) of $e$. From a modeling perspective, the main important property is that if it returns $e'$ and $e$ is well-typed then $\Gamma\vdash e\equiv e'$ is provable.
  \item \href{https://github.com/digama0/lean4lean/blob/cpp2026/Lean4Lean/TypeChecker.lean#L310}{\code{whnfCore}} $e$ \lstinline|cheapRec| \lstinline|cheapProj| has the same specification as \href{https://github.com/digama0/lean4lean/blob/cpp2026/Lean4Lean/TypeChecker.lean#L435}{\code{whnf}} $e$, but it lacks some early exit paths. \lstinline|cheapRec| and \lstinline|cheapProj| are flags affecting which kinds of terms are unfolded. The reason \href{https://github.com/digama0/lean4lean/blob/cpp2026/Lean4Lean/TypeChecker.lean#L435}{\code{whnf}} and \href{https://github.com/digama0/lean4lean/blob/cpp2026/Lean4Lean/TypeChecker.lean\#L310}{\code{whnfCore}} show up separately in this list is because they are both used in other functions (i.e. some functions call \href{https://github.com/digama0/lean4lean/blob/cpp2026/Lean4Lean/TypeChecker.lean#L310}{\code{whnfCore}} directly because e.g. the fast path isn't applicable).
  \item \href{https://github.com/digama0/lean4lean/blob/cpp2026/Lean4Lean/TypeChecker.lean#L240}{\code{inferType}} $e$ \lstinline|inferOnly| infers or typechecks a term. That is, if \href{https://github.com/digama0/lean4lean/blob/cpp2026/Lean4Lean/TypeChecker.lean#L240}{\code{inferType}} $e$ \lstinline|false| returns $\alpha$ then $\Gamma\vdash e:\alpha$ holds, and if \href{https://github.com/digama0/lean4lean/blob/cpp2026/Lean4Lean/TypeChecker.lean#L240}{\code{inferType}} $e$ \lstinline|true| returns $\alpha$ then $\Gamma\vdash e:\alpha$ holds assuming $\Gamma\vdash e:\alpha'$ for some $\alpha'$.
\end{enumerate}

\noindent We use this to define the \href{https://github.com/digama0/lean4lean/blob/cpp2026/Lean4Lean/TypeChecker.lean#L59}{\code{RecM}} monad:

\begin{lstlisting}
structure $\mbox{\href{https://github.com/digama0/lean4lean/blob/cpp2026/Lean4Lean/TypeChecker.lean\#L53}{Methods}}$ where
  isDefEqCore : Expr → Expr → M Bool
  whnfCore (e : Expr) (cheapRec := false)
    (cheapProj := false) : M Expr
  whnf (e : Expr) : M Expr
  inferType (e : Expr) (inferOnly : Bool) : M Expr

abbrev $\mbox{\href{https://github.com/digama0/lean4lean/blob/cpp2026/Lean4Lean/TypeChecker.lean\#L59}{RecM}}$ := ReaderT $\mbox{\href{https://github.com/digama0/lean4lean/blob/cpp2026/Lean4Lean/TypeChecker.lean\#L53}{Methods}}$ $\mbox{\href{https://github.com/digama0/lean4lean/blob/cpp2026/Lean4Lean/TypeChecker.lean\#L32}{M}}$
\end{lstlisting}
This is a monad stack on top of the main \href{https://github.com/digama0/lean4lean/blob/cpp2026/Lean4Lean/TypeChecker.lean#L32}{\code{TypeChecker.M}} monad which contains the actual state of the typechecker.


Now, this method of calling functions to avoid recursion only kicks the problem up one step. How do we construct an element of \href{https://github.com/digama0/lean4lean/blob/cpp2026/Lean4Lean/TypeChecker.lean#L53}{\code{Methods}} if all the functions require another \href{https://github.com/digama0/lean4lean/blob/cpp2026/Lean4Lean/TypeChecker.lean#L53}{\code{Methods}}? Ideally, we would actually prove the termination of the kernel, because DTT is supposed to be terminating. However:
\begin{itemize}
  \item It is unlikely that we can prove termination of a typechecker for Lean in Lean, because although the soundness proof from \cite{leantt} does not depend on termination, MetaRocq's does \cite{coqcoqcorrect}, and generally termination measures for DTT require large cardinals of comparable strength to the proof theory. We are up against G\"{o}del's incompleteness theorem, so anything that would imply the unconditional soundness of Lean won't be directly provable.
  \item Besides this, the Lean type theory is known not to terminate. Coquand and Abel \cite{coqabel} constructed a counterexample to strong normalization using reduction of proofs, and this can be shown to impact definitional equality checks even for regular types:
\begin{lstlisting}
/-! Andreas-Abel construction of
    nontermination in proofs -/
def True' := ∀ p : Prop, p → p
def om : True' := fun A a =>
  @cast (True' → True') A
    (propext ⟨fun _ => a, fun _ => id⟩)
    (fun z => z (True' → True') id z)
def Om : True' := om (True' → True') id om
#reduce Om -- whnf nontermination

/-! nontermination outside proofs: -/
inductive Foo : Prop | mk : True' → Foo
def foo : Foo := Om _ (Foo.mk fun _ => id)
example : foo.recOn (fun _ => 1) = 1 := by
  rfl -- isDefEq nontermination
\end{lstlisting}
  Essentially, this is a combination of impredicativity, proof irrelevance, and subsingleton elimination.
  \item The kernel does not loop forever in many cases because this is a bad user experience --- it has timeouts and depth limits. Not all parts of the kernel have such limits, but it does give us a reasonable design principle which fortuitously solves our termination problem.
\end{itemize}

So we use what is arguably%
\footnote{There are alternative methods for representing partial functions such as the Bove-Capretta method (take a proof of termination as argument), but we do not want to use this as the kernel API is fixed to match Lean's actual kernel function, and we also actually want it to time out on extreme cases.}
the standard solution for defining partial functions in a language like Lean or Rocq: use a fuel parameter, a natural number which counts the number of nested recursive calls to one of the \href{https://github.com/digama0/lean4lean/blob/cpp2026/Lean4Lean/TypeChecker.lean#L53}{\code{Methods}}, and throw a \href{https://github.com/leanprover/lean4/blob/v4.20.1/src/Lean/Environment.lean#L267}{\code{deepRecursion}} error if we run out of fuel.
Currently, this limit is a fixed constant (1000), which turns out to be sufficient for checking all of \textsf{Mathlib}, but this could be made configurable.\footnote{This is not a limit on the depth of expressions exactly, these use structural recursion and hence need no fuel for the termination argument; instead fuel is only consumed when making a recursive call that is not otherwise decreasing, for example when reducing definitions to WHNF. The code that is most likely to hit depth limits is in proofs by reflection, but these are comparatively rare, in part because the kernel algorithm for this is not very efficient.}

\subsubsection{Inferring types}\label{sec:inferType}
The kernel's typechecking algorithm is fairly straightforward, as far as DTT typecheckers go. It does not do bidirectional typechecking; instead it simply propagates forward the type of expressions via the \href{https://github.com/digama0/lean4lean/blob/cpp2026/Lean4Lean/TypeChecker.lean#L240}{\code{inferType}} function, which has an \lstinline|inferOnly| flag which is true if we can assume the expression is well-typed. For example, here is how $\lambda x:\alpha.\;e$ expressions are typed:
\begin{lstlisting}
def $\mbox{\href{https://github.com/digama0/lean4lean/blob/cpp2026/Lean4Lean/TypeChecker.lean\#L115}{inferLambda}}$ (e : Expr) (inferOnly : Bool) :
    RecM Expr := loop #[] e where
  loop fvars : Expr → RecM Expr
  | .lam name dom body bi => do
    let d := dom.instantiateRev fvars
    if !inferOnly then
      let sort ← inferType d inferOnly
      _ ← ensureSortCore sort d
    withLocalDecl name d bi fun fv => do
      loop (fvars.push fv) body
  | e => do
    let e' := e.instantiateRev fvars
    let r ← inferType e' inferOnly
    let r := r.cheapBetaReduce
    return (← getLCtx).mkForall fvars r
\end{lstlisting}

\noindent Observe that this does more than simply checking \textsc{t-lam}:
\begin{itemize}
  \item Since it is using locally-nameless representation, whenever it needs to enter a binder it has to create a fresh \code{fvar} and replace \lstinline|bvar 0|. The \href{https://github.com/digama0/lean4lean/blob/cpp2026/Lean4Lean/LocalContext.lean#L13}{\code{withLocalDecl}} function does this, and changes the local context to include the new \code{fvar}, but instantiation is deferred here.
  \item It actually consumes a \emph{sequence} of lambdas, using the tail-recursive function \code{loop}. This is a performance optimization, since instantiation is $O(n)$ and so repeatedly instantiating the body in each step of the loop would cause $O(n^2)$ work. The actual instantiation is the two calls to \href{https://github.com/leanprover/lean4/blob/v4.20.1/src/Lean/Expr.lean#L1400}{\code{instantiateRev}}.
  \item When \code{inferOnly} is false, meaning that it can assume the term is well-formed, it does not need to typecheck the type of the lambda.
  \item At the end of the loop, it uses \href{https://github.com/digama0/lean4lean/blob/cpp2026/Lean4Lean/Instantiate.lean#L8}{\code{cheapBetaReduce}} to reduce the result type in the common case that it has the form $(\lambda \overline{x}.\;x_i)\;\overline{e}$, or $(\lambda \overline{x}.\;c)\;\overline{e}$ (where $c$ does not depend on $\overline{x}$).
\end{itemize}
We will discuss this function further in \autoref{sec:verify}.

\subsubsection{WHNF and reduction}
The \href{https://github.com/digama0/lean4lean/blob/cpp2026/Lean4Lean/TypeChecker.lean#L435}{\code{whnf}} function performs reductions at the ``head'' of an expression until a constructor is exposed. In so doing it implements a lazy, call-by-name evaluation strategy.

\begin{itemize}
  \item A variable, sort, axiom, lambda, forall, or literal is in WHNF.
  \item An application $f\;a$ reduces $f$ to WHNF first. If it reduces to $\lambda x.\;e$, then $\beta$-reduce to $e[a/x]$ and continue.
  \item A let-expression $\mathsf{let}\;x:=a;\;e$ reduces to $e[a/x]$ and continues.
  \item The recursor of an inductive type reduces its major argument first. If it reduces to a constructor, then $\beta$-reduce (e.g. $\mathsf{rec}_\N\;C\;z\;s\;(\mathsf{suc}\;n)\rightsquigarrow s\;n\;(\mathsf{rec}_\N\;C\;z\;s\;n)$) and continue.
  \item A definition reduces to its body and continues.
\end{itemize}

\noindent Beyond these basics, there are a few additional Lean specifics:
\begin{itemize}
  \item There is a list of 15 built-in functions on $\N$ like $+$, $-$, $\times$, $\mathsf{div}$, $\mathsf{mod}$, $\mathsf{bitAnd}$, etc., which are evaluated in call-by-value order. For example:
  \begin{itemize}
    \item If we are reducing $a+b$ to WHNF, then first reduce $a$ and $b$. If these reduce to literals $[n]$, $[m]$, then compute $[n+m]$ and stop. (Compare: the default behavior on $a+1$ would reduce to $\mathsf{suc}\;(a+0)$ without reducing $a$.)
  \end{itemize}
  \item When encountering $\mathsf{\href{https://github.com/leanprover/lean4/blob/v4.20.1/src/Init/Core.lean\#L2344-L2363}{reduceBool}}\;c$ where $c:\mathsf{Bool}$ is a constant definition, $c$ is compiled to interpreter bytecode and executed. The result (\lstinline|true| or \lstinline|false|) is reified as an expression of type $\mathsf{Bool}$ as the result.
\end{itemize}

Supporting built-in functions on $\N$ in \lfl was not difficult, but doing so soundly required implementing checks to ensure that e.g. when a definition named \href{https://github.com/leanprover/lean4/blob/v4.20.1/src/Init/Prelude.lean#L1639-L1648}{\code{Nat.add}} is added to the environment, it satisfies the definitional equalities $a+0\equiv a$ and $a+\mathsf{suc}\;n\equiv \mathsf{suc}\;(a+n)$. This allows us to prove by induction that this definition will compute on literals such that $[n]+[m]\equiv[n+m]$.\footnote{The original Lean kernel does not perform these checks. This is technically unsound, but to exploit it one would need to replace the prelude, and this is not a supported configuration.}

On the other hand, supporting native evaluation using $\mathsf{\href{https://github.com/leanprover/lean4/blob/v4.20.1/src/Init/Core.lean\#L2344-L2363}{reduceBool}}$ is not nearly so easy. In theory, with a formalization of the IR (intermediate representation) semantics, it may be possible to prove the soundness of IR compilation and interpretation. However, IR can also call external functions from C, and with \href{https://github.com/leanprover/lean4/blob/v4.20.1/src/Lean/Compiler/ImplementedByAttr.lean#L14}{\code{implemented\_by}} one can completely decouple the compiler behavior from the kernel's version of the implementation, so the general feature is simply ``unsound by design.'' Given that \textsf{Mathlib} and many other mathematics projects avoid \href{https://github.com/leanprover/lean4/blob/v4.20.1/src/Init/Core.lean#L2344-L2363}{\textsf{reduceBool}} completely for these reasons, we do not lose much by not supporting it.

\subsubsection{Definitional equality checking}
When writing a DTT type checker, this is the least ``canonical'' part. There are many different heuristics one can employ here, and because the worst-case time complexity is galactically large, the heuristics are critical for ensuring that typechecking stays within the same ballpark as when it was first checked in the presence of the human author (since we are only aiming to be able to check Lean projects in the wild). The algorithm for checking $s\overset{?}{\equiv} t$ in Lean goes roughly as follows:

\begin{itemize}
  \item If $s$ and $t$ are both lambdas, sorts, foralls, or literals, then compare subterms.
  \item Validate $\mathsf{true}\equiv t$ if $t\rightsquigarrow \mathsf{true}$.\footnote{This is an optimization for proof by reflection, where a goal $t=\mathsf{true}$ is proved using $\mathsf{refl}\;\mathsf{true}:\mathsf{true}=\mathsf{true}$.}
  \item Reduce $s\rightsquigarrow s'$ and $t\rightsquigarrow t'$ with \lstinline|cheapProj := true|; continue with $s'\overset{?}{\equiv} t'$.
  \item If $s:\alpha$ and $t:\beta$ and $\alpha\equiv\beta:\U_0$ then return true (\textsc{t-proof-irrel}).
  \item If it is $f\;\overline{a}\overset{?}{\equiv} g\;\overline{b}$ where both $f$ and $g$ are definitions, if $\mathsf{height}(f)<\mathsf{height}(g)$ then unfold $g$ and continue; else unfold $f$ and continue. Here $\mathsf{height}(f)$ is an unfolding heuristic based on definitional height (where $\mathsf{height}(f)=1+\max_{c\preceq e}\mathsf{height}(c)$;)
  \item Compare subterms for applications, projections, and variables.
  \item Reduce $s\rightsquigarrow s'$ and $t\rightsquigarrow t'$ again, if this makes progress.
  \item If it is $f\;\overline{a}\overset{?}{\equiv} f\;\overline{b}$ where $f$ is a definition, try $\overline{a\overset{?}{\equiv}b}$ backtracking if it fails. Otherwise unfold $s$ and $t$ and continue.
  \item Try $\eta$-expansion, replacing $s\overset{?}{\equiv} \lambda \overline{x}.\;t$ with $s\;\overline{x}\overset{?}{\equiv} t$.
  \item Try structure $\eta$, replacing $s\overset{?}{\equiv} (\overline{t_i})$ with $\overline{s.i\overset{?}{\equiv}t_i}$
  \item Unfold string literals.
  \item Try unit $\eta$, proving $s\equiv t$ if $s:\mathsf{Unit}$.
  \item Otherwise, fail. (Because of various incompletenesses in this algorithm, this doesn't actually imply $s\not\equiv t$.)
\end{itemize}

\section{Verifying the algorithm}\label{sec:verify}

We would now like to connect the results of \autoref{sec:base-theory} and \autoref{sec:typeck} by showing that the monadic typechecker functions satisfy their respective specifications. The typechecker monad \href{https://github.com/digama0/lean4lean/blob/cpp2026/Lean4Lean/TypeChecker.lean#L32}{\code{M}} is a standard reader/state/exception monad stack:
\begin{lstlisting}
abbrev $\mbox{\href{https://github.com/digama0/lean4lean/blob/cpp2026/Lean4Lean/TypeChecker.lean\#L32}{M}}$ :=
  ReaderT $\mbox{\href{https://github.com/digama0/lean4lean/blob/cpp2026/Lean4Lean/TypeChecker.lean\#L24}{Context}}$ <| StateT $\mbox{\href{https://github.com/digama0/lean4lean/blob/cpp2026/Lean4Lean/TypeChecker.lean\#L15}{State}}$ <| Except $\mbox{\href{https://github.com/leanprover/lean4/blob/v4.20.1/src/Lean/Environment.lean\#L251}{Exception}}$
\end{lstlisting}
Because we give the semantics of \href{https://github.com/leanprover/lean4/blob/v4.20.1/src/Lean/Expr.lean#L301}{\code{Expr}} using \href{https://github.com/digama0/lean4lean/blob/cpp2026/Lean4Lean/Theory/VExpr.lean#L6}{\code{VExpr}}, and this translation is only unique up to def.eq., we will maintain a shadow state \href{https://github.com/digama0/lean4lean/blob/cpp2026/Lean4Lean/Verify/TypeChecker/Theorems.lean#L111}{\code{VContext}} and \href{https://github.com/digama0/lean4lean/blob/cpp2026/Lean4Lean/Verify/TypeChecker/Theorems.lean#L143}{\code{VState}} which have \lstinline|Expr × VExpr| pairs everywhere so that we can project out both the original \href{https://github.com/digama0/lean4lean/blob/cpp2026/Lean4Lean/TypeChecker.lean#L24}{\code{Context}} as well as the corresponding \href{https://github.com/digama0/lean4lean/blob/cpp2026/Lean4Lean/Theory/VExpr.lean#L6}{\code{VExpr}} for each well-formed expression.

We use a lightweight monadic program logic based on Hoare logic, but ``one-sided'': for the precondition, we pass the exact context and state (and additional hypotheses become regular preconditions on the theorem), and for the postcondition we have a predicate over possible next state and return values.

\begin{lstlisting}
def $\mbox{\href{https://github.com/digama0/lean4lean/blob/cpp2026/Lean4Lean/Verify/TypeChecker/Theorems.lean\#L171}{M.WF}}$ (c : VContext) (vs : VState) (x : M α)
    (Q : α → VState → Prop) : Prop :=
  vs.WF c → ∀ a s',
  x c.toContext vs.toState = .ok (a, s') → ∃ vs',
  vs'.toState = s' ∧ vs ≤ vs' ∧ vs'.WF c ∧ Q a vs'
\end{lstlisting}
In words, this says that assuming $(c,vs)$ are \href{https://github.com/digama0/lean4lean/blob/cpp2026/Lean4Lean/Verify/TypeChecker/Theorems.lean#L145}{well-formed}, and denoting the projection to the machine state/context with $\widehat{\cdot}$, if running $x$ on $(\widehat{c},\widehat{vs})$ evaluates to $a$ in state $s'$ then there is some $vs'$ with $\widehat{vs'}=s'$ such that \href{https://github.com/digama0/lean4lean/blob/cpp2026/Lean4Lean/Verify/TypeChecker/Theorems.lean#L153}{$vs'$ is in the future of $vs$}, $(c,vs')$ is \href{https://github.com/digama0/lean4lean/blob/cpp2026/Lean4Lean/Verify/TypeChecker/Theorems.lean#L145}{well-formed}, and $Q(a,vs')$ holds. The future-of relation, denoted \href{https://github.com/digama0/lean4lean/blob/cpp2026/Lean4Lean/Verify/TypeChecker/Theorems.lean#L153}{$vs\le vs'$}, is a partial order on states which represent valid extension sequences. This is used mainly to handle the global counter used for creating fresh \lstinline|fvar|s. The well-formedness invariant ensures that the current counter value is greater than any \lstinline|fvar| currently in the state or context, and the \href{https://github.com/digama0/lean4lean/blob/cpp2026/Lean4Lean/Verify/TypeChecker/Theorems.lean#L153}{$vs\le vs'$} relation says that the global counter only ever goes up, so the state/context remains well-formed.

We can use this to define the specification of the \href{https://github.com/digama0/lean4lean/blob/cpp2026/Lean4Lean/TypeChecker.lean#L53}{\code{Methods}}:
\begin{lstlisting}
structure $\mbox{\href{https://github.com/digama0/lean4lean/blob/cpp2026/Lean4Lean/Verify/TypeChecker/Theorems.lean\#L203}{Methods.WF}}$ (m : Methods) where
  isDefEqCore : c.TrExpr e₁ e₁' → c.TrExpr e₂ e₂' →
    (m.isDefEqCore e₁ e₂).WF c s fun b _ =>
      b → c.IsDefEqU e₁' e₂'
  whnfCore : c.TrExpr e e' →
    (m.whnfCore e cr cp).WF c s
      fun e₁ _ => c.TrExpr e₁ e'
  whnf : c.TrExpr e e' →
    (m.whnf e).WF c s fun e₁ _ => c.TrExpr e₁ e'
  inferType : e.FVarsIn s.ngen.Reserves →
    (inferOnly = true → ∃ e', c.TrExpr e e') →
    M.WF c s (m.inferType e inferOnly) fun ty _ =>
      ∃ e' ty', c.TrExpr e e' ∧ c.TrExpr ty ty' ∧
        c.HasType e' ty'
\end{lstlisting}
\begin{itemize}
  \item \href{https://github.com/digama0/lean4lean/blob/cpp2026/Lean4Lean/TypeChecker.lean#L665}{\code{isDefEqCore}} (and \href{https://github.com/digama0/lean4lean/blob/cpp2026/Lean4Lean/TypeChecker.lean#L145}{\code{isDefEq}}) ensures that if it is called on well-typed expressions $e_1\leadsto e_1'$ and $e_2\leadsto e_2'$, and it returns \textsf{true}, then $e_1'\equiv e_2'$.
  \item \href{https://github.com/digama0/lean4lean/blob/cpp2026/Lean4Lean/TypeChecker.lean#L435}{\code{whnf}} and \href{https://github.com/digama0/lean4lean/blob/cpp2026/Lean4Lean/TypeChecker.lean#L310}{\code{whnfCore}} ensure that if it is called on a well-formed expression $e$ and returns $e_1$ then $e\equiv e_1$. We can succinctly say this by assuming $e\leadsto e'$ and asserting $e_1\leadsto e'$ as well, since $\leadsto$ is closed under def.eq.
  \item \href{https://github.com/digama0/lean4lean/blob/cpp2026/Lean4Lean/TypeChecker.lean#L240}{\code{inferType}} does not assume that it has a well-formed input, but it does need to know that the global counter is fresh for any fvars in the input. If \lstinline|inferOnly| is true, it also assumes $e$ is well-formed. It asserts that $e\leadsto e'$ and $\tau\leadsto \tau'$ are well-formed, and $e':\tau'$.
\end{itemize}
We then define \href{https://github.com/digama0/lean4lean/blob/cpp2026/Lean4Lean/Verify/TypeChecker/Theorems.lean#L203}{\code{RecM.WF}} with a similar signature to \href{https://github.com/digama0/lean4lean/blob/cpp2026/Lean4Lean/Verify/TypeChecker/Theorems.lean#L171}{\code{M.WF}} to mean that it is valid when called with a well-formed \href{https://github.com/digama0/lean4lean/blob/cpp2026/Lean4Lean/TypeChecker.lean#L53}{\code{Methods}}.

The specification of \href{https://github.com/digama0/lean4lean/blob/cpp2026/Lean4Lean/TypeChecker.lean#L115}{\code{inferLambda}} from \autoref{sec:inferType} should now be unsurprising:
\begin{lstlisting}
theorem $\mbox{\href{https://github.com/digama0/lean4lean/blob/cpp2026/Lean4Lean/TypeChecker.lean\#L765-L769}{inferLambda.WF}}$
  (h1 : e.FVarsIn s.ngen.Reserves)
  (hinf : inferOnly = true → ∃ e', c.TrExpr e e') :
  ($\mbox{\href{https://github.com/digama0/lean4lean/blob/cpp2026/Lean4Lean/TypeChecker.lean\#L115}{inferLambda}}$ e inferOnly).WF c s fun ty _ =>
    ∃ e' ty', c.TrExpr e e' ∧ c.TrExpr ty ty' ∧
      c.HasType e' ty'
\end{lstlisting}
The theorem is proved compositionally over the code. At the loop, we need an inductive invariant which reveals some of the complications of having a sequence of fvars that have not yet been applied to some of the inputs:
\begin{lstlisting}
theorem $\mbox{\href{https://github.com/digama0/lean4lean/blob/cpp2026/Lean4Lean/TypeChecker.lean\#L695-L705}{inferLambda.loop.WF}}$
  {c : VContext} {e₀ : Expr} {m : MLCtx}
  [c.MLCWF m] {n} (hn : n ≤ m.length)
  (hdrop : m.dropN n hn = c.mlctx)
  (harr : arr.toList.reverse =
    (m.fvarRevList n hn).map .fvar)
  (he₀ : e₀ = m.mkLambda n hn ei)
  (hei : e.instantiateList
    ((m.fvarRevList n hn).map .fvar) = ei)
  (hr1 : e.FVarsIn s.ngen.Reserves)
  (hr2 : ∀ v ∈ m.vlctx.fvars, s.ngen.Reserves v)
  (hinf : inferOnly = true → ∃ e',
    (c.withMLC m).TrExpr ei e') :
  ($\mbox{\href{https://github.com/digama0/lean4lean/blob/cpp2026/Lean4Lean/TypeChecker.lean\#L116}{inferLambda.loop}}$ inferOnly arr e).WF
      (c.withMLC m) s fun ty _ =>
    ∃ e', c.TrExpr e₀ e' ∧ ∃ ty', c.TrExpr ty ty' ∧ c.HasType e' ty'
\end{lstlisting}
Here $m:\mbox{\href{https://github.com/digama0/lean4lean/blob/cpp2026/Lean4Lean/TypeChecker.lean\#L47}{\code{MLCtx}}}$ is the aforementioned local context consisting of \lstinline|Expr × VExpr| pairs, which we use to locally change the local context while leaving the rest of the \href{https://github.com/digama0/lean4lean/blob/cpp2026/Lean4Lean/Verify/TypeChecker/Theorems.lean#L111}{\code{VContext}} alone.

\subsection{Bit tricks and soundness bugs}
Currently, only a few functions from the typechecker have been verified, like \href{https://github.com/digama0/lean4lean/blob/cpp2026/Lean4Lean/TypeChecker.lean#L115}{\code{inferLambda}}. However, in the course of performing the verification we have already run into some interesting issues, and at least one soundness bug (which has since been fixed in response to our report).

Recall that \href{https://github.com/digama0/lean4lean/blob/cpp2026/Lean4Lean/TypeChecker.lean#L115}{\code{inferLambda}} calls \href{https://github.com/digama0/lean4lean/blob/cpp2026/Lean4Lean/Instantiate.lean#L8}{\code{cheapBetaReduce}}, so we need to prove that \href{https://github.com/digama0/lean4lean/blob/cpp2026/Lean4Lean/Instantiate.lean#L8}{\code{cheapBetaReduce}} (which morally performs only $\beta$-reduction) preserves def.eq. The \href{https://github.com/digama0/lean4lean/blob/cpp2026/Lean4Lean/Instantiate.lean#L8}{\code{cheapBetaReduce}} function in turn uses \href{https://github.com/leanprover/lean4/blob/v4.20.1/src/Lean/Expr.lean#L1267}{\code{hasLooseBVars}} to determine whether the function is a constant function (since $\lambda \overline{x}.\;c$ is a constant function if $c$ has no bvars after stripping all the binders without instantiating them). \href{https://github.com/leanprover/lean4/blob/v4.20.1/src/Lean/Expr.lean#L1267}{\code{hasLooseBVars}} $e$ is defined as \lstinline|$\mbox{\href{https://github.com/leanprover/lean4/blob/v4.20.1/src/Lean/Expr.lean\#L590}{looseBVarRange}}$ $e$ > 0|, so we are reduced to verifying \href{https://github.com/leanprover/lean4/blob/v4.20.1/src/Lean/Expr.lean#L590}{\code{looseBVarRange}}, which returns the largest unbound bvar.

One would expect this to be defined recursively over \href{https://github.com/leanprover/lean4/blob/v4.20.1/src/Lean/Expr.lean#L301}{\code{Expr}}:
\begin{align*}
\mathsf{LBR}(\mathsf{bvar}\;i)&=i+1,\\
\mathsf{LBR}(\mathsf{lam}\;\tau\;e)&=\max(\mathsf{LBR}(\tau),\mathsf{LBR}(e)-1), \quad \mbox{etc.}
\end{align*}
However, it is instead implemented by masking off 20 bits from a 64-bit data field \href{https://github.com/leanprover/lean4/blob/v4.20.1/src/Lean/Expr.lean#L470}{$e.\mathsf{data}$} which is stored with every expression. So this naturally leads to the question: what happens on overflow? The calculation of \href{https://github.com/leanprover/lean4/blob/v4.20.1/src/Lean/Expr.lean#L470}{$e.\mathsf{data}$} had an overflow check, but it was expressed in lean using the \href{https://github.com/leanprover/lean4/blob/v4.20.1/src/Init/Prelude.lean#L2860}{\code{panic!}} function, which signals an error on \textsf{stdout} but then continues execution with the default value of the type, which is \href{https://github.com/leanprover/lean4/blob/v4.20.1/src/Init/Prelude.lean#L2354}{$0$ for \code{UInt64}}, which also happens to be the worst possible answer in this case, since it effectively turns on all of the optimizations, including the one in question about \href{https://github.com/leanprover/lean4/blob/v4.20.1/src/Lean/Expr.lean#L1267}{\code{hasLooseBVars}}. This is a nice example of a bug that was found entirely theoretically, by failing to prove a theorem and working backwards to an actual exploitable soundness bug.

Unfortunately, the fix does not address all of the problems we identified. The way it was addressed was to make the \href{https://github.com/leanprover/lean4/blob/v4.20.1/src/Lean/Expr.lean#L470}{\code{Expr.data}} function opaque, which at most masks the issue; and also by moving the panic to C++ code, where it can ``panic harder'' and crash the program, so that the strange consequences of returning $0$ are not observed. Using opaques to replace a low level definition with a model is a reasonable option, but currently we are still relying on an unsound assumption (\href{https://github.com/digama0/lean4lean/blob/cpp2026/Lean4Lean/Verify/Axioms.lean#L164-L165}{\code{looseBVarRange\_eq}}), because the definition of \href{https://github.com/leanprover/lean4/blob/v4.20.1/src/Lean/Expr.lean#L590}{\code{looseBVarRange}} is still visible and so one can prove that e.g. \lstinline|$\mbox{\href{https://github.com/leanprover/lean4/blob/v4.20.1/src/Lean/Expr.lean\#L590}{looseBVarRange}}$ $e$ < 2^20| unconditionally, which means that \href{https://github.com/leanprover/lean4/blob/v4.20.1/src/Lean/Expr.lean#L590}{\code{looseBVarRange}} cannot be the ``correct'' definition, which means the theorem isn't provable even though all counterexamples are unreachable (but Lean's logic doesn't know that).

\section{The global environment}\label{sec:env}

The environment is the global structure that ties together individual declarations. It has made some appearances in the previous sections already, because the environment is needed to typecheck constants as well as definitional extensions, given in the \textsc{t-const} and \textsc{t-extra} rules.

The Lean type for this, \href{https://github.com/leanprover/lean4/blob/v4.20.1/src/Lean/Environment.lean#L198}{\code{Environment}}, is complex and contains many details irrelevant to the kernel, but luckily we only really care about a few operations on it:

\begin{itemize}
  \item $\mbox{\href{https://github.com/digama0/lean4lean/blob/cpp2026/Lean4Lean/Environment/Basic.lean\#L64}{\code{empty}}} : \mbox{\lstinline|Environment|}$\\ -- constructs an environment with no constants
  \item $\mbox{\href{https://github.com/leanprover/lean4/blob/v4.20.1/src/Lean/Environment.lean\#L302}{\code{add}}} : \mbox{\lstinline|Environment|}\to\mbox{\lstinline|ConstantInfo|}\to\mbox{\lstinline|Environment|}$\\ -- adds a \lstinline|ConstantInfo| declaration to the environment
  \item $\mbox{\href{https://github.com/leanprover/lean4/blob/v4.20.1/src/Lean/Environment.lean\#L811}{\code{find?}}} : \mbox{\lstinline|Environment|} \to \mbox{\lstinline|Name|} \to \mbox{\lstinline|Option ConstantInfo|}$\\ -- retrieves a declaration by name
\end{itemize}
In effect, the environment is just a fancy hashmap which indexes \href{https://github.com/leanprover/lean4/blob/v4.20.1/src/Lean/Declaration.lean#L417}{\code{ConstantInfo}} declarations by their names.

The corresponding theory type is called \href{https://github.com/digama0/lean4lean/blob/cpp2026/Lean4Lean/Theory/VEnv.lean#L15}{\code{VEnv}}, with the definition:
\begin{lstlisting}
structure $\mbox{\href{https://github.com/digama0/lean4lean/blob/cpp2026/Lean4Lean/Theory/VEnv.lean\#L5}{VConstant}}$ where
  (uvars : Nat) (type : VExpr)
structure $\mbox{\href{https://github.com/digama0/lean4lean/blob/cpp2026/Lean4Lean/Theory/VEnv.lean\#L9}{VDefEq}}$ where
  (uvars : Nat) (lhs rhs type : VExpr)
structure $\mbox{\href{https://github.com/digama0/lean4lean/blob/cpp2026/Lean4Lean/Theory/VEnv.lean\#L15}{VEnv}}$ where
  constants : Name → Option (Option VConstant)
  defeqs : VDefEq → Prop
\end{lstlisting}
This is a very simple type, essentially exactly what is needed to satisfy the requirements of the typing judgment.
\begin{itemize}
  \item \lstinline|constants| maps a name to a constant, if defined, where a constant is given by its universe arity and its type. It returns \lstinline|none| if a constant by that name does not exist, and \lstinline|some none| if the name has been ``blocked'', meaning that there is no constant there but we still want to make it illegal to make a definition with that name.\footnote{This is how we model \lstinline|unsafe| declarations when typechecking safe declarations: they ``don't exist'' for most purposes, but it is nevertheless not allowed to shadow them with another safe declaration.} This corresponds to the $\bar u.(c_{\bar u}:\alpha)$ declaration appearing in \textsc{t-const}.
  \item \lstinline|defeqs| is a set of (anonymous) \href{https://github.com/digama0/lean4lean/blob/cpp2026/Lean4Lean/Theory/VEnv.lean#L9}{\code{VDefEq}}s, containing two expressions to be made definitionally equal and their type. This corresponds to the $\bar u.(e\equiv e':\alpha)$ declaration appearing in \textsc{t-extra}.
\end{itemize}

In some sense this doesn't actually answer many interesting questions about where the constants come from or what definitional equalities are permitted. For that, we need \href{https://github.com/digama0/lean4lean/blob/cpp2026/Lean4Lean/Theory/VDecl.lean#L22}{\code{VDecl}}, the type of records that can be used to update the environment. The declarations are:
\begin{itemize}
  \item \lstinline|block $n$| blocks constant $n$, setting \lstinline|env.constants $n$ = some none| and preventing a redeclaration as mentioned above.
  \item \lstinline|axiom { name := $c$, uvars := $n$, type := $\alpha$ }|\\ adds $c:\alpha$ as an axiom to the constant map.
  \item \lstinline|opaque { $c$, $n$, $\alpha$, value := $e$ }| checks that $e:\alpha$, then adds $c:\alpha$ to the constant map.
  \item \lstinline|def { $c$, $n$, $\alpha$, $e$ }| does the same as \lstinline|opaque| but also adds $c\equiv e:\alpha$ to the defeq set.
  \item \lstinline|example { uvars := $n$, type := $\alpha$, value := $e$ }|\\ checks that $e:\alpha$ and leaves the environment as-is.
  \item \lstinline|induct ($d$ : VInductDecl)| adds all the constants and defeqs from an inductive declaration (see \autoref{sec:inductive}).
  \item \lstinline|quot| adds the quotient axioms and definitional equality. This is a type operator $\alpha/R$ where $R:\alpha\to\alpha\to\mathtt{Prop}$ with $\mbox{mk}:\alpha\to\alpha/R$ and $\mbox{lift}:(f:\alpha\to\beta)\to(\forall x y.\;R\;x\;y\to f\;x=f\;y)\to (\alpha/R\to\beta)$, with the property that $(\mbox{lift}\;f\;h\;(\mbox{mk}\;a)\equiv f\;a:\beta)$. (One can construct every part of this in Lean except for the definitional equality.)
\end{itemize}

This enumeration is fairly similar to the \href{https://github.com/leanprover/lean4/blob/v4.20.1/src/Lean/Declaration.lean#L173}{\code{Declaration}} type which is the actual front end to the kernel, but it lacks \lstinline|unsafe| declarations and \lstinline|mutualDefnDecl| (which, despite the name, is not the way mutual definitions are sent to the kernel, but rather represents unsafe declarations with unchecked self-referential definitions). For the most part we do not attempt to model \lstinline|unsafe| declarations because these are typechecked with certain checks disabled, and this makes them no longer follow the theory, but this is deliberate. There isn't much we can say about such definitions since they can violate type safety and cause undefined behavior, and in any case they don't play a role in checking theorems.

\section{Inductive types}\label{sec:inductive}

Lean's implementation of inductive types is based on Dybjer \cite{dybjer}. It notably differs from Rocq in that rather than having primitive \textsf{fix} and \textsf{match} expressions, we have constants called ``recursors'' generated from the inductive specification. This choice significantly simplifies the \href{https://github.com/digama0/lean4lean/blob/cpp2026/Lean4Lean/Theory/VExpr.lean#L6}{\code{VExpr}} type, which as we have seen contains no special support for anything inductive-related except for the generic def.eq. hook \textsc{t-extra}. It also means we do not need a complex guard checker, something which MetaRocq currently axiomatizes \cite{coqcoqcorrect} and which has been the source of soundness bugs in the past. Instead, the complexity is pushed to the generation of the recursor for the inductive type.\footnote{It would be premature to draw conclusions about the relative complexity of these two approaches though, as many of the important theorems are unfinished.}

For single inductives, the algorithm is described in full in \cite{leantt}. In Lean 3, nested and mutual inductives were simulated using single inductives so the kernel only had to deal with the single inductive case, but for performance reasons Lean 4 moved the checking to the kernel. For a simple example of a mutual inductive, we can define \lstinline|Even| and \lstinline|Odd| by mutual induction like so:
\begin{lstlisting}
mutual
inductive Even : Nat → Prop where
  | zero : Even 0
  | succ : Odd n → Even (n+1)

inductive Odd : Nat → Prop where
  | succ' : Even n → Odd (n+1)
end
\end{lstlisting}
which generates the definitions:
\begin{lstlisting}
Even : Nat → Type
Even.zero : Even 0
Even.succ : ∀ {n : Nat}, Odd n → Even (n + 1)
Odd : Nat → Type
Odd.succ' : ∀ {n : Nat}, Even n → Odd (n + 1)
Even.rec.{u} :
  {motive_1 : (a : Nat) → Even a → Sort u} →
  {motive_2 : (a : Nat) → Odd a → Sort u} →
  motive_1 0 Even.zero →
  ({n : Nat} → (a : Odd n) → motive_2 n a → motive_1 (n + 1) (Even.succ a)) →
  ({n : Nat} → (a : Even n) → motive_1 n a → motive_2 (n + 1) (Odd.succ' a)) →
  {a : Nat} → (t : Even a) → motive_1 a t
Odd.rec.{u} :
  ... → -- same as Even.rec
  {a : Nat} → (t : Odd a) → motive_2 a t
\end{lstlisting}
and the definitional equalities:
\begin{lstlisting}
@Even.rec m1 m2 Fz Fs Fs' 0 Even.zero ≡ Fz
@Even.rec m1 m2 Fz Fs Fs' (n+1) (Even.succ e) ≡
  @Fs n e (@Odd.rec m1 m2 Fz Fs Fs' n e)
@Odd.rec m1 m2 Fz Fs Fs' (n+1) (Odd.succ' e) ≡
  @Fs' n e (@Even.rec m1 m2 Fz Fs Fs' n e)
\end{lstlisting}

For nested inductives, the checking procedure effectively amounts to running the simulation procedure from Lean 3 to get a mutual inductive type and performing well-formedness checks there, then removing all references to the unfolded type in the recursor.

Currently, \lfl contains a complete implementation of inductive types, including nested and mutual inductives and all the typechecking effects of this, but not much work has been done on the theoretical side, defining what an inductive specification should generate.

\subsection{Eta for structures}\label{sec:eta}

Another new kernel feature in Lean 4 is primitive projections, and $\eta$ for structures, which says that if $s$ is a \lstinline|structure| (which is to say, a one-constructor non-recursive \lstinline|inductive| type), then $s\equiv\mathsf{mk}(s.1,\dots,s.n)$ where $s.i$ is the $i$'th projection (\href{https://github.com/leanprover/lean4/blob/v4.20.1/src/Lean/Expr.lean#L452-L467}{\code{Expr.proj}}). This was already true as a propositional equality proven by induction on $s$, but it is now a definitional equality and the kernel has to unfold structures when needed to support this. $\eta$ for unit in particular is known to be problematic \cite{eta-unit,lennon-bertrand2022}, but Lean's kernel does not implement a complete decision procedure anyway. We believe it poses no issue for soundness and the main results should still hold even with this extension.

\section{Results}\label{sec:results}

\lfl contains a command-line frontend, which can be used to validate already-compiled Lean projects. It operates on \texttt{.olean} files, which are essentially serialized (unordered) lists of \href{https://github.com/leanprover/lean4/blob/v4.20.1/src/Lean/Declaration.lean#L417}{\code{ConstantInfo}} objects, and the frontend topologically sorts these definitions and passes them to the \lfl kernel, which is a function
\begin{lstlisting}
def $\mbox{\href{https://github.com/digama0/lean4lean/blob/cpp2026/Lean4Lean/Environment.lean\#L98-L99}{addDecl}}$ (env : Environment)
    (decl : Declaration) (check := true) :
  Except KernelException Environment
\end{lstlisting}
that is a drop-in replacement for \href{https://github.com/leanprover/lean4/blob/v4.20.1/src/Lean/Environment.lean#L288-L289}{\code{Environment.addDeclCore}}, an opaque Lean function implemented by FFI (for\-eign-func\-tion interface) to the corresponding C++ kernel function.

This is the same interface (and to some extent, the same code) as \texttt{lean4checker},\footnote{\url{https://github.com/leanprover/lean4checker}} which does the same but calls Lean's \href{https://github.com/leanprover/lean4/blob/v4.20.1/src/Lean/AddDecl.lean#L14-L15}{\code{addDecl}} instead, using the C++ kernel as an external verifier. It may seem odd to use the Lean kernel as an external verifier for itself, but it can catch cases where malicious (or confused) Lean code uses of the very powerful metaprogramming framework to simply bypass the kernel and add constants to the environment without typechecking them.
For our purposes, it acts as a very handy benchmark for comparison, since we are performing the same operations but with a different kernel. Running \verb|lake env lean4lean| \verb|--fresh Mathlib| in \texttt{mathlib4} will check the \textsf{Mathlib} module and all of its dependencies.\footnote{This is running in single-threaded mode; it can also check modules individually and in parallel, assuming the correctness of imports, but it runs into memory usage limitations so is harder to benchmark reliably.}

\begin{figure}[tb]
  \centering
  \begin{tabular}{l|c|c|c}
    & \texttt{lean4checker} & \texttt{lean4lean} & ratio \\ \hline
    \textsf{Lean} & 37.01 s & 44.61 s & 1.21 \\
    \textsf{Batteries} & 32.49 s & 45.74 s & 1.40 \\
    \makecell[lc]{\textsf{Mathlib} \\ (+ \textsf{B.} + \textsf{Lean})} & 44.54 min & 58.79 min & 1.32 \\
  \end{tabular}
  \caption{Comparison of the original C++ implementation of the Lean kernel (\texttt{lean4checker}) with \texttt{lean4lean} (Lean) on major Lean packages. Tests were performed on a 12 core 12th Gen Intel i7-1255U @ 2.1 GHz, single-threaded, on rev. 526c94c of \texttt{mathlib4}.}
  \label{fig:comparison}
\end{figure}

The results are summarized in \autoref{fig:comparison}. The new Lean implementation is around 30\% slower than the C++ implementation, which we attribute mainly to shortcomings in the Lean compiler and data representation compared to C++. Nevertheless, it is within an order of magnitude and practically usable on large libraries, which we consider a strong sign.
We do not expect this to get any faster without Lean compiler improvements, but we also intend to keep the speed from regressing as we add more verification. Because the verification is external to the executable itself, this is possible in theory, but more tooling may be needed to make it possible to verify the implementation as-is in practice.

\section{Related work}\label{sec:related-work}

Self-verification of ITP systems has been done in Rocq \cite{coqincoq}, HOL Light \cite{harrison-holhol}, Isabelle \cite{nipkow21}, and Metamath Zero \cite{mm0-paper}, and to some extent self-verification also overlaps with bootstrapping theorem provers such as Milawa \cite{davis09} and Candle \cite{candle,Kumar14}. But the most similar work is unquestionably MetaRocq \cite{sozeau20,coqcoqcorrect,sozeau23}, which is a project to develop a verified typechecker for Rocq, in Rocq. It is significantly more complete than the previous effort \cite{coqincoq}, but is not yet capable of verifying the Rocq standard library due to lack of eta-conversion and template polymorphism; eta-conversion is work in progress \cite{lennon-bertrand2022}, and universe polymorphism being reorganised in Rocq's kernel \cite{poiret2025}. With \lfl we took the approach of quickly getting to feature parity with the real Lean kernel and proving correctness later, so we ended up with a feature-complete typechecker but are still lacking in some theorems. Our work also overlaps to some extent with Template-Rocq \cite{anand18,boulierphd,winterhalterphd}, a certified metaprogramming framework for Rocq developed in the MetaRocq repo.

As was mentioned, MetaRocq also has to deal with many of the same issues proving properties of the typing judgment, and one may hope for proofs from MetaRocq to be useful in \lfl and vice-versa. Unfortunately, there are some important details that differ:
\begin{itemize}
  \item In MetaRocq, the conversion relation is a partial order rather than an equivalence relation, because of universe cumulativity, so \sref{Conjecture}{conj:IsDefEq.uniq} doesn't hold as stated. Nevertheless, there are theorems regarding ``principal types'' which can be used to play the same role.
  \item Unlike in \lfl, the conversion relation is \emph{untyped}: there is no mutual induction between typing and definitional equality, which massively simplifies matters. In Lean this is unfortunately not an option because of the \textsc{t-proof-irrel} rule, which is expressed in a different way in Rocq.
  \item Lean's typechecker is known to not be complete, owing to some of the undecidability results in \cite{leantt}, but MetaRocq \cite{sozeau23,lennonbertrandphd} has a proof of completeness.
\end{itemize}

\section{Conclusion}\label{sec:conclusion}

We have presented \lfl, a reimplementation of the Lean 4 typechecker written in Lean itself and closely following the kernel. Despite its simplicity, \lfl can validate large-scale developments such as \textsf{mathlib} with competitive performance.

More importantly, \lfl serves as a basis for formal reasoning about Lean’s metatheory. By internalizing the type system, we can verify properties of the checker and use these results to increase confidence in the production kernel. Already, this process has revealed subtle issues and motivated fixes upstream.

Looking ahead, we plan to formalize more of \cite{leantt}, including in particular the proof of relative consistency given a large cardinal assumption. By continuing the verification effort, we hope to discover more potential bugs in the production kernel. In the long term, we hope that this work will lead to a fully verified, self-hosted Lean kernel.

\begin{acks}
I would like to thank Jeremy Avigad for their support and encouragement, Yannick Forster and the anonymous reviewers for reading early drafts of this work, and Matthieu Sozeau and the rest of the MetaRocq team for collaborating and taking interest in this work. Work supported by the Hoskinson Center for Formal Mathematics at Carnegie Mellon.
\end{acks}


\bibliographystyle{ACM-Reference-Format}
\bibliography{references}


\begin{thebibliography}{28}


\ifx \showCODEN    \undefined \def \showCODEN     #1{\unskip}     \fi
\ifx \showDOI      \undefined \def \showDOI       #1{#1}\fi
\ifx \showISBNx    \undefined \def \showISBNx     #1{\unskip}     \fi
\ifx \showISBNxiii \undefined \def \showISBNxiii  #1{\unskip}     \fi
\ifx \showISSN     \undefined \def \showISSN      #1{\unskip}     \fi
\ifx \showLCCN     \undefined \def \showLCCN      #1{\unskip}     \fi
\ifx \shownote     \undefined \def \shownote      #1{#1}          \fi
\ifx \showarticletitle \undefined \def \showarticletitle #1{#1}   \fi
\ifx \showURL      \undefined \def \showURL       {\relax}        \fi
\providecommand\bibfield[2]{#2}
\providecommand\bibinfo[2]{#2}
\providecommand\natexlab[1]{#1}
\providecommand\showeprint[2][]{arXiv:#2}

\bibitem[Abel and Coquand(2020)]%
        {coqabel}
\bibfield{author}{\bibinfo{person}{Andreas Abel} {and} \bibinfo{person}{Thierry Coquand}.} \bibinfo{year}{2020}\natexlab{}.
\newblock \showarticletitle{{Failure of Normalization in Impredicative Type Theory with Proof-Irrelevant Propositional Equality}}.
\newblock \bibinfo{journal}{\emph{{Logical Methods in Computer Science}}}  \bibinfo{volume}{{Volume 16, Issue 2}} (\bibinfo{date}{June} \bibinfo{year}{2020}).
\newblock
\urldef\tempurl%
\url{https://doi.org/10.23638/LMCS-16(2:14)2020}
\showDOI{\tempurl}


\bibitem[Abrahamsson et~al\mbox{.}(2022)]%
        {candle}
\bibfield{author}{\bibinfo{person}{Oskar Abrahamsson}, \bibinfo{person}{Magnus~O. Myreen}, \bibinfo{person}{Ramana Kumar}, {and} \bibinfo{person}{Thomas Sewell}.} \bibinfo{year}{2022}\natexlab{}.
\newblock \showarticletitle{{Candle: A Verified Implementation of HOL Light}}. In \bibinfo{booktitle}{\emph{13th International Conference on Interactive Theorem Proving (ITP 2022)}} \emph{(\bibinfo{series}{Leibniz International Proceedings in Informatics (LIPIcs)}, Vol.~\bibinfo{volume}{237})}, \bibfield{editor}{\bibinfo{person}{June Andronick} {and} \bibinfo{person}{Leonardo de~Moura}} (Eds.). \bibinfo{publisher}{Schloss Dagstuhl -- Leibniz-Zentrum f{\"u}r Informatik}, \bibinfo{address}{Dagstuhl, Germany}, \bibinfo{pages}{3:1--3:17}.
\newblock
\showISBNx{978-3-95977-252-5}
\showISSN{1868-8969}
\urldef\tempurl%
\url{https://doi.org/10.4230/LIPIcs.ITP.2022.3}
\showDOI{\tempurl}


\bibitem[Anand et~al\mbox{.}(2018)]%
        {anand18}
\bibfield{author}{\bibinfo{person}{Abhishek Anand}, \bibinfo{person}{Simon Boulier}, \bibinfo{person}{Cyril Cohen}, \bibinfo{person}{Matthieu Sozeau}, {and} \bibinfo{person}{Nicolas Tabareau}.} \bibinfo{year}{2018}\natexlab{}.
\newblock \showarticletitle{{Towards Certified Meta-Programming with Typed Template-Coq}}. In \bibinfo{booktitle}{\emph{{ITP 2018 - 9th Conference on Interactive Theorem Proving}}} \emph{(\bibinfo{series}{LNCS}, Vol.~\bibinfo{volume}{10895})}. \bibinfo{publisher}{{Springer}}, \bibinfo{address}{Oxford, United Kingdom}, \bibinfo{pages}{20--39}.
\newblock
\urldef\tempurl%
\url{https://doi.org/10.1007/978-3-319-94821-8\_2}
\showDOI{\tempurl}


\bibitem[Barras and Werner(1997)]%
        {coqincoq}
\bibfield{author}{\bibinfo{person}{Bruno Barras} {and} \bibinfo{person}{Benjamin Werner}.} \bibinfo{year}{1997}\natexlab{}.
\newblock \showarticletitle{{Coq in Coq}}.
\newblock \bibinfo{journal}{\emph{Available on the WWW}} (\bibinfo{year}{1997}).
\newblock


\bibitem[Boulier(2018)]%
        {boulierphd}
\bibfield{author}{\bibinfo{person}{Simon~Pierre Boulier}.} \bibinfo{year}{2018}\natexlab{}.
\newblock \emph{\bibinfo{title}{{Extending type theory with syntactic models}}}.
\newblock Theses. \bibinfo{school}{{Ecole nationale sup{\'e}rieure Mines-T{\'e}l{\'e}com Atlantique}}.
\newblock
\urldef\tempurl%
\url{https://theses.hal.science/tel-02007839}
\showURL{%
\tempurl}


\bibitem[Carneiro(2019)]%
        {leantt}
\bibfield{author}{\bibinfo{person}{Mario Carneiro}.} \bibinfo{year}{2019}\natexlab{}.
\newblock \emph{\bibinfo{title}{{The Type Theory of Lean}}}.
\newblock \bibinfo{thesistype}{Master's\ thesis}. \bibinfo{school}{Carnegie Mellon University}.
\newblock
\urldef\tempurl%
\url{https://github.com/digama0/lean-type-theory/releases/tag/v1.0}
\showURL{%
\tempurl}


\bibitem[Carneiro(2020)]%
        {mm0-paper}
\bibfield{author}{\bibinfo{person}{Mario Carneiro}.} \bibinfo{year}{2020}\natexlab{}.
\newblock \showarticletitle{Metamath Zero: Designing a Theorem Prover Prover}. In \bibinfo{booktitle}{\emph{Intelligent Computer Mathematics: 13th International Conference, CICM 2020, Bertinoro, Italy, July 26–31, 2020, Proceedings}} (Bertinoro, Italy). \bibinfo{publisher}{Springer-Verlag}, \bibinfo{address}{Berlin, Heidelberg}, \bibinfo{pages}{71–88}.
\newblock
\showISBNx{978-3-030-53517-9}
\urldef\tempurl%
\url{https://doi.org/10.1007/978-3-030-53518-6_5}
\showDOI{\tempurl}


\bibitem[Davis(2009)]%
        {davis09}
\bibfield{author}{\bibinfo{person}{Jared~Curran Davis}.} \bibinfo{year}{2009}\natexlab{}.
\newblock \emph{\bibinfo{title}{A self-verifying theorem prover}}.
\newblock \bibinfo{thesistype}{Ph.\,D. Dissertation}. \bibinfo{school}{University of Texas}.
\newblock


\bibitem[de~Moura et~al\mbox{.}(2015)]%
        {demoura}
\bibfield{author}{\bibinfo{person}{Leonardo de Moura}, \bibinfo{person}{Soonho Kong}, \bibinfo{person}{Jeremy Avigad}, \bibinfo{person}{Floris Van~Doorn}, {and} \bibinfo{person}{Jakob von Raumer}.} \bibinfo{year}{2015}\natexlab{}.
\newblock \showarticletitle{{The Lean Theorem Prover (system description)}}. In \bibinfo{booktitle}{\emph{International Conference on Automated Deduction}}. Springer, \bibinfo{pages}{378--388}.
\newblock


\bibitem[Dybjer(1994)]%
        {dybjer}
\bibfield{author}{\bibinfo{person}{Peter Dybjer}.} \bibinfo{year}{1994}\natexlab{}.
\newblock \showarticletitle{{Inductive Families}}.
\newblock \bibinfo{journal}{\emph{Formal aspects of computing}} \bibinfo{volume}{6}, \bibinfo{number}{4} (\bibinfo{year}{1994}), \bibinfo{pages}{440--465}.
\newblock


\bibitem[Ebner(2022)]%
        {trepplein}
\bibfield{author}{\bibinfo{person}{Gabriel Ebner}.} \bibinfo{year}{2022}\natexlab{}.
\newblock \bibinfo{title}{trepplein: a Lean 3 Type Checker}.
\newblock \bibinfo{howpublished}{\url{https://github.com/gebner/trepplein}}.
\newblock


\bibitem[Harrison(2006)]%
        {harrison-holhol}
\bibfield{author}{\bibinfo{person}{John Harrison}.} \bibinfo{year}{2006}\natexlab{}.
\newblock \showarticletitle{Towards self-verification of HOL Light}. In \bibinfo{booktitle}{\emph{Proceedings of the third International Joint Conference, IJCAR 2006}} \emph{(\bibinfo{series}{Lecture Notes in Computer Science}, Vol.~\bibinfo{volume}{4130})}, \bibfield{editor}{\bibinfo{person}{Ulrich Furbach} {and} \bibinfo{person}{Natarajan Shankar}} (Eds.). \bibinfo{publisher}{Springer-Verlag}, \bibinfo{address}{Seattle, WA}, \bibinfo{pages}{177--191}.
\newblock


\bibitem[Kovács(2025)]%
        {eta-unit}
\bibfield{author}{\bibinfo{person}{András Kovács}.} \bibinfo{year}{2025}\natexlab{}.
\newblock \showarticletitle{{Eta conversion for the unit type (is still not that simple)}}. In \bibinfo{booktitle}{\emph{{WITS 2025 - Workshop on the Implementation of Type Systems}}}.
\newblock
\urldef\tempurl%
\url{https://popl25.sigplan.org/details/wits-2025-papers/4/Eta-conversion-for-the-unit-type-is-still-not-that-simple-}
\showURL{%
\tempurl}


\bibitem[Kumar et~al\mbox{.}(2014)]%
        {Kumar14}
\bibfield{author}{\bibinfo{person}{Ramana Kumar}, \bibinfo{person}{Magnus~O. Myreen}, \bibinfo{person}{Michael Norrish}, {and} \bibinfo{person}{Scott Owens}.} \bibinfo{year}{2014}\natexlab{}.
\newblock \showarticletitle{{CakeML: A Verified Implementation of ML}}.
\newblock \bibinfo{journal}{\emph{SIGPLAN Not.}} \bibinfo{volume}{49}, \bibinfo{number}{1} (\bibinfo{date}{Jan.} \bibinfo{year}{2014}), \bibinfo{pages}{179--191}.
\newblock
\showISSN{0362-1340}
\urldef\tempurl%
\url{https://doi.org/10.1145/2578855.2535841}
\showDOI{\tempurl}


\bibitem[Lennon-Bertrand(2022a)]%
        {lennonbertrandphd}
\bibfield{author}{\bibinfo{person}{Meven Lennon-Bertrand}.} \bibinfo{year}{2022}\natexlab{a}.
\newblock \emph{\bibinfo{title}{{Bidirectional Typing for the Calculus of Inductive Constructions}}}.
\newblock Theses. \bibinfo{school}{{Nantes Universit{\'e}}}.
\newblock
\urldef\tempurl%
\url{https://theses.hal.science/tel-03848595}
\showURL{%
\tempurl}


\bibitem[Lennon-Bertrand(2022b)]%
        {lennon-bertrand2022}
\bibfield{author}{\bibinfo{person}{Meven Lennon-Bertrand}.} \bibinfo{year}{2022}\natexlab{b}.
\newblock \showarticletitle{À bas l’$\eta$ — Coq’s troublesome $\eta$-conversion}. In \bibinfo{booktitle}{\emph{Proceedings of the First Workshop on the Implementation of Type Systems (WITS)}}.
\newblock
\urldef\tempurl%
\url{https://www.meven.ac/documents/22-WITS-abstract.pdf}
\showURL{%
\tempurl}


\bibitem[Lennon-Bertrand(2025)]%
        {lennonbertrand2025}
\bibfield{author}{\bibinfo{person}{Meven Lennon-Bertrand}.} \bibinfo{year}{2025}\natexlab{}.
\newblock \showarticletitle{{What Does It Take to Certify a Conversion Checker?}}. In \bibinfo{booktitle}{\emph{10th International Conference on Formal Structures for Computation and Deduction (FSCD 2025)}} \emph{(\bibinfo{series}{Leibniz International Proceedings in Informatics (LIPIcs)}, Vol.~\bibinfo{volume}{337})}, \bibfield{editor}{\bibinfo{person}{Maribel Fern\'{a}ndez}} (Ed.). \bibinfo{publisher}{Schloss Dagstuhl -- Leibniz-Zentrum f{\"u}r Informatik}, \bibinfo{address}{Dagstuhl, Germany}, \bibinfo{pages}{27:1--27:23}.
\newblock
\showISBNx{978-3-95977-374-4}
\showISSN{1868-8969}
\urldef\tempurl%
\url{https://doi.org/10.4230/LIPIcs.FSCD.2025.27}
\showDOI{\tempurl}


\bibitem[Nipkow and Ro\ss{}kopf(2021)]%
        {nipkow21}
\bibfield{author}{\bibinfo{person}{Tobias Nipkow} {and} \bibinfo{person}{Simon Ro\ss{}kopf}.} \bibinfo{year}{2021}\natexlab{}.
\newblock \showarticletitle{Isabelle’s Metalogic: Formalization and Proof Checker}. In \bibinfo{booktitle}{\emph{Automated Deduction – CADE 28: 28th International Conference on Automated Deduction, Virtual Event, July 12–15, 2021, Proceedings}}. \bibinfo{publisher}{Springer-Verlag}, \bibinfo{address}{Berlin, Heidelberg}, \bibinfo{pages}{93–110}.
\newblock
\showISBNx{978-3-030-79875-8}
\urldef\tempurl%
\url{https://doi.org/10.1007/978-3-030-79876-5_6}
\showDOI{\tempurl}


\bibitem[Paulin-Mohring(2015)]%
        {cic}
\bibfield{author}{\bibinfo{person}{Christine Paulin-Mohring}.} \bibinfo{year}{2015}\natexlab{}.
\newblock \showarticletitle{{Introduction to the Calculus of Inductive Constructions}}.
\newblock In \bibinfo{booktitle}{\emph{{All about Proofs, Proofs for All}}}, \bibfield{editor}{\bibinfo{person}{Bruno~Woltzenlogel Paleo} {and} \bibinfo{person}{David Delahaye}} (Eds.). \bibinfo{series}{Studies in Logic (Mathematical logic and foundations)}, Vol.~\bibinfo{volume}{55}. \bibinfo{publisher}{{College Publications}}.
\newblock
\urldef\tempurl%
\url{https://inria.hal.science/hal-01094195}
\showURL{%
\tempurl}


\bibitem[Poiret et~al\mbox{.}(2025)]%
        {poiret2025}
\bibfield{author}{\bibinfo{person}{Josselin Poiret}, \bibinfo{person}{Gaetan Gilbert}, \bibinfo{person}{Kenji Maillard}, \bibinfo{person}{Pierre{-}Marie P{\'e}drot}, \bibinfo{person}{Matthieu Sozeau}, \bibinfo{person}{Nicolas Tabareau}, {and} \bibinfo{person}{{\'E}ric Tanter}.} \bibinfo{year}{2025}\natexlab{}.
\newblock \showarticletitle{{All Your Base Are Belong to Us: Sort Polymorphism for Proof Assistants}}. In \bibinfo{booktitle}{\emph{Proceedings of the 52nd ACM SIGPLAN Symposium on Principles of Programming Languages (POPL 2025)}}. \bibinfo{publisher}{ACM}, \bibinfo{address}{Denver, CO, USA}.
\newblock
\newblock
\shownote{Conference talk, Session: POPL Research Papers, Fri 24 Jan 2025 17:40--18:00, Proof Assistants track}.


\bibitem[Selsam(2017)]%
        {tc}
\bibfield{author}{\bibinfo{person}{Daniel Selsam}.} \bibinfo{year}{2017}\natexlab{}.
\newblock \bibinfo{title}{tc: Reference Type Checker for Lean 3}.
\newblock \bibinfo{howpublished}{\url{https://github.com/dselsam/tc}}.
\newblock


\bibitem[Sozeau(2021)]%
        {sozeau-touring}
\bibfield{author}{\bibinfo{person}{Matthieu Sozeau}.} \bibinfo{year}{2021}\natexlab{}.
\newblock \showarticletitle{{Touring the MetaCoq Project (Invited Paper)}}. In \bibinfo{booktitle}{\emph{{LFMTP 2021 - Logical Frameworks and Meta-Languages: Theory and Practice}}}. \bibinfo{address}{Pittsburgh, United States}, \bibinfo{pages}{1--17}.
\newblock
\urldef\tempurl%
\url{https://inria.hal.science/hal-03516619}
\showURL{%
\tempurl}


\bibitem[Sozeau et~al\mbox{.}(2020)]%
        {sozeau20}
\bibfield{author}{\bibinfo{person}{Matthieu Sozeau}, \bibinfo{person}{Abhishek Anand}, \bibinfo{person}{Simon Boulier}, \bibinfo{person}{Cyril Cohen}, \bibinfo{person}{Yannick Forster}, \bibinfo{person}{Fabian Kunze}, \bibinfo{person}{Gregory Malecha}, \bibinfo{person}{Nicolas Tabareau}, {and} \bibinfo{person}{Th{\'e}o Winterhalter}.} \bibinfo{year}{2020}\natexlab{}.
\newblock \showarticletitle{{The MetaCoq Project}}.
\newblock \bibinfo{journal}{\emph{{Journal of Automated Reasoning}}} (\bibinfo{date}{Feb.} \bibinfo{year}{2020}).
\newblock
\urldef\tempurl%
\url{https://doi.org/10.1007/s10817-019-09540-0}
\showDOI{\tempurl}


\bibitem[Sozeau et~al\mbox{.}(2019)]%
        {coqcoqcorrect}
\bibfield{author}{\bibinfo{person}{Matthieu Sozeau}, \bibinfo{person}{Simon Boulier}, \bibinfo{person}{Yannick Forster}, \bibinfo{person}{Nicolas Tabareau}, {and} \bibinfo{person}{Th\'{e}o Winterhalter}.} \bibinfo{year}{2019}\natexlab{}.
\newblock \showarticletitle{{Coq Coq correct! verification of type checking and erasure for Coq, in Coq}}.
\newblock \bibinfo{journal}{\emph{Proc. ACM Program. Lang.}} \bibinfo{volume}{4}, \bibinfo{number}{POPL}, Article \bibinfo{articleno}{8} (\bibinfo{date}{dec} \bibinfo{year}{2019}), \bibinfo{numpages}{28}~pages.
\newblock
\urldef\tempurl%
\url{https://doi.org/10.1145/3371076}
\showDOI{\tempurl}


\bibitem[Sozeau et~al\mbox{.}(2025)]%
        {sozeau23}
\bibfield{author}{\bibinfo{person}{Matthieu Sozeau}, \bibinfo{person}{Yannick Forster}, \bibinfo{person}{Meven Lennon-Bertrand}, \bibinfo{person}{Jakob Nielsen}, \bibinfo{person}{Nicolas Tabareau}, {and} \bibinfo{person}{Th\'{e}o Winterhalter}.} \bibinfo{year}{2025}\natexlab{}.
\newblock \showarticletitle{{Correct and Complete Type Checking and Certified Erasure for Coq, in Coq}}.
\newblock \bibinfo{journal}{\emph{J. ACM}} \bibinfo{volume}{72}, \bibinfo{number}{1}, Article \bibinfo{articleno}{8} (\bibinfo{date}{Jan.} \bibinfo{year}{2025}), \bibinfo{numpages}{74}~pages.
\newblock
\showISSN{0004-5411}
\urldef\tempurl%
\url{https://doi.org/10.1145/3706056}
\showDOI{\tempurl}


\bibitem[Team(2020)]%
        {coq2020}
\bibfield{author}{\bibinfo{person}{The Coq~Development Team}.} \bibinfo{year}{2020}\natexlab{}.
\newblock \bibinfo{title}{{The Coq Proof Assistant}}.
\newblock
\newblock
\urldef\tempurl%
\url{https://doi.org/10.5281/zenodo.3744225}
\showDOI{\tempurl}


\bibitem[Ullrich(2023)]%
        {sebastianthesis}
\bibfield{author}{\bibinfo{person}{Sebastian~Andreas Ullrich}.} \bibinfo{year}{2023}\natexlab{}.
\newblock \emph{\bibinfo{title}{An Extensible Theorem Proving Frontend}}.
\newblock \bibinfo{thesistype}{Ph.\,D. Dissertation}. \bibinfo{school}{Karlsruher Institut für Technologie (KIT)}.
\newblock
\urldef\tempurl%
\url{https://doi.org/10.5445/IR/1000161074}
\showDOI{\tempurl}


\bibitem[Winterhalter(2020)]%
        {winterhalterphd}
\bibfield{author}{\bibinfo{person}{Théo Winterhalter}.} \bibinfo{year}{2020}\natexlab{}.
\newblock \emph{\bibinfo{title}{Formalisation and meta-theory of type theory}}.
\newblock \bibinfo{thesistype}{Ph.\,D. Dissertation}. \bibinfo{school}{Université de Nantes}.
\newblock
\urldef\tempurl%
\url{https://theowinterhalter.github.io/#phd}
\showURL{%
\tempurl}


\end{thebibliography}

\end{document}